\theoremstyle{plain}
\newtheorem{theorem}{Theorem}[section]
\theoremstyle{plain}
\newtheorem{lemma}{Lemma}[section]
\theoremstyle{plain}
\newtheorem{corollary}{Corollary}[section]
\theoremstyle{plain}
\theoremstyle{plain}
\newtheorem{conjecture}{Conjecture}
\theoremstyle{plain}
\newtheorem{claim}{Claim}[section]
\theoremstyle{definition}
\newtheorem{definition}{Definition}[section]
\theoremstyle{definition}
\newtheorem{fact}{Fact}[section]
\theoremstyle{remark}
\theoremstyle{definition}
\DeclareMathOperator{\real}{\mathbb{R}}
\DeclareMathOperator{\nat}{\mathbb{N}}
\newcommand{\cmplx}{\mathbb{C}}
\newcommand{\intg}{\mathbb{Z}}
\newcommand{\poly}{\text{poly}}
\newcommand{\tr}{\text{Tr}}
\newcommand{\id}{\mathbb{I}}
\newcommand{\cliff}{\mathcal{C}}
\newcommand{\pauli}{\mathcal{P}}
\newcommand{\clifft}{\mathcal{J}}
\newcommand{\X}{\text{X}}
\newcommand{\Y}{\text{Y}}
\newcommand{\Z}{\text{Z}}
\newcommand{\had}{\text{H}}
\newcommand{\T}{\text{T}}
\newcommand{\CNOT}{\text{CNOT}}
\newcommand{\phase}{\text{S}}
\newcommand{\tcount}{\mathcal{T}}
\newcommand{\teps}{\mathcal{T}_{\epsilon}}
\newcommand{\tdepth}{\mathcal{T}_{d}}
\newcommand{\tdeps}{\mathcal{T}_{d\epsilon}}
\newcommand{\conj}[1]{\overline{#1}}
\newcommand{\dph}{D_P}
\newcommand{\tset}{\overline{T}}
\newcommand{\genv}{\mathbb{V}}
\newcommand{\diag}{\text{diagonal}}
\begin{document}
\title{T-count and T-depth of \emph{any} multi-qubit unitary}

\author[1,2]{Vlad Gheorghiu \thanks{vlad.gheorghiu@uwaterloo.ca}}
\author[1,2,3,4]{Michele Mosca \thanks{michele.mosca@uwaterloo.ca}}
\author[1,3]{Priyanka Mukhopadhyay \thanks{Corresponding author : mukhopadhyay.priyanka@gmail.com, p3mukhop@uwaterloo.ca}}

\affil[1]{Institute for Quantum Computing, University of Waterloo, Waterloo ON, Canada}
\affil[2]{softwareQ Inc., Kitchener ON, Canada}
\affil[3]{Dept. of Combinatorics and Optimization, University of Waterloo, Waterloo ON, Canada}
\affil[4]{Perimeter Institute for Theoretical Physics, Waterloo ON, Canada}

\date{}
\maketitle

\begin{abstract}
 While implementing a quantum algorithm it is crucial to reduce the quantum resources, in order to obtain the desired computational advantage. For most fault-tolerant quantum error-correcting codes the cost of implementing the non-Clifford gate is the highest among all the gates in a universal fault-tolerant gate set. 
 
 In this paper we design an algorithm to determine the (minimum possible) T-count of any $n$-qubit ($n\geq 1$) unitary $W$ of size $2^n\times 2^n$, over the Clifford+T gate set. The space and time complexity of our algorithm are $O\left(2^{2n}\right)$ and $O\left(2^{2n\teps(W)+4n}\right)$ respectively. $\teps(W)$ (\emph{$\epsilon$-T-count}) is the (minimum possible) T-count of an exactly implementable unitary $U$ i.e. $\tcount(U)$, such that $d(U,W)\leq\epsilon$ and $\tcount(U)\leq\tcount(U')$ where $U'$ is any exactly implementable unitary with $d(U',W)\leq\epsilon$. $d(.,.)$ is the global phase invariant distance. 
 Our algorithm can also be used to determine the (minimum possible) T-depth of any multi-qubit unitary and the complexity has exponential dependence on $n$ and \emph{$\epsilon$-T-depth}. This is the first algorithm that gives T-count or T-depth of \emph{any} multi-qubit ($n\geq 1$) unitary. For small enough $\epsilon$, we can synthesize the T-count and T-depth-optimal circuits. 
 
 Our results can be used to determine the minimum count (or depth) of non-Clifford gates required to implement any multi-qubit unitary with a finite universal gate set consisting of Clifford and non-Clifford gates like Clifford+CS, Clifford+V, etc. To the best of our knowledge, there were no such optimal-synthesis algorithms for arbitrary multi-qubit unitaries in any universal gate set, considering any distance metric.

\end{abstract}


\section{Introduction}

The vision of Feynman \cite{1982_F} that a quantum computer can be used to overcome the limitations of classical computers, gained momentum with the design of quantum algorithms that outperform their classical counterparts for  popular challenging problems like factorization \cite{1994_S,1999_S}, searching an unstructured solution space \cite{1996_G}. Quantum circuit is one of the most popular way for describing and implementing quantum algorithms. These consist of a series of elementary operations or gates belonging to a universal set and dictated by the implementing technologies. Like their classical counterpart, circuit synthesis and optimization is a significant part of any quantum computer compilation process. A compiler primarily translates from a human readable input (programming language) into instructions that can be executed directly on a hardware. An integral part of this process is \textbf{quantum circuit synthesis}, whose aim is to decompose a unitary operation into a sequence of gates from a universal set. Often, additional constraints are imposed on a synthesis task, like minimization of a certain resource like qubits, gates (total number), non-Clifford gates, multi-qubit gates, etc. We call them \textbf{resource-optimal synthesis algorithm}.

Our work primarily focuses on the ``Clifford+T'' gate set, a popular finite universal fault-tolerant set. Fault-tolerant quantum error correction \cite{2000_ZLC, 2005_BK} is required to control the accumulation of errors due to noise on quantum information, faulty quantum gates, faulty quantum state preparation, faulty measurements, etc. This is especially important for long computations, else the errors will make negligible the likelihood of obtaining a reliable and useful answer. The non-Clifford T gate has known constructions in most of the error correction schemes and the cost of fault-tolerantly implementing it exceeds the cost of the Clifford group gates by as much as a factor of hundred or more \cite{2009_FSG, 2006_AGP}. 
The minimum number of T-gates required to implement certain unitaries is a quantifier of difficulty in many algorithms \cite{2016_BG, 2016_BSS} that try to classically simulate quantum computation. So, even though alternative fault-tolerance methods such as completely transversal Clifford+T scheme \cite{2013_PR} and anyonic quantum computing \cite{2003_K} are also being explored, minimization of the number of T gates (or \emph{T-count}) in quantum circuits remain an important and widely studied goal. It has been argued in \cite{2012_F, 2016_AdMVMPS, 2020_dMGM} that it is also important to reduce the maximum number of T gates in any circuit path (or \emph{T-depth}). 
                                
The Solovay-Kitaev algorithm \cite{1997_K, 2006_DN} guarantees that given an $n$-qubit unitary $W$, we can generate a circuit with a ``discrete finite'' universal gate set like Clifford+T, such that the unitary $U$ implemented by the circuit is at most a certain distance from $W$. Here we note that there exists ``infinite continuous''universal gate sets like Clifford+$R_z(\theta)$, with which we can implement any unitary, without any approximation. In this paper we focus on finite universal gate sets, that are more suitable for quantum error correction and fault tolerance. In fact, in quantum computation a set of gates is said to be \emph{universal} if any quantum operation can be approximated to arbitrary accuracy by a quantum circuit involving only those gates \cite{2010_NC}.
A unitary is called \textbf{exactly implementable} by a gate set if there exists a quantum circuit with these gates, that implements it (up to some global phase). Otherwise, it is \textbf{approximately implementable}. Accordingly, a synthesis algorithm can be (a) \emph{exact} when $U=e^{i\phi}W$ ($\phi$ is the global phase); or (b) \emph{approximate} when $d(U,W)\leq\epsilon$ for some $\epsilon> 0$. $d(.)$ is a distance metric. For an unitary $U$ that is exactly implementable by the Clifford+T set, its \textbf{T-count} (denoted by $\tcount(U)$) is the minimum number of T-gates required to implement it, while its \textbf{T-depth} (denoted by $\tcount_d(U)$) is the minimum T-depth of any circuit that implements it. These definitions can be generalized for approximately implementable unitaries and have been described in Section \ref{subsec:TcountDepth}. In this paper we give algorithm for the following two problems.

\paragraph{$\epsilon$-T-COUNT :} Given an $n$-qubit unitary $W$ and $\epsilon\in\real_{\geq 0}$, determine the T-count of a unitary $U$ such that $\tcount(U)\leq\tcount(U')$, where $U,U'$ are $n$-qubit exactly implementable unitaries and $d(U,W),d(U',W)\leq\epsilon$.

\paragraph{$\epsilon$-T-DEPTH :} Given an $n$-qubit unitary $W$ and $\epsilon\in\real_{\geq 0}$, determine the T-depth of a unitary $U$ such that $\tcount_d(U)\leq\tcount_d(U')$, where $U,U'$ are $n$-qubit exactly implementable unitaries and $d(U,W),d(U',W)\leq\epsilon$.

$\tcount(U)$ and $\tcount_d(U)$ are called the \textbf{$\epsilon$-T-count} ($\teps(W)$) and \textbf{$\epsilon$-T-depth} ($\tdeps(W)$) of $W$, respectively. The T-count and T-depth-optimal circuits of $U$ are called \textbf{$\epsilon$-T-count-optimal} and \textbf{$\epsilon$-T-depth-optimal} circuit for $W$, respectively. 
In this paper we use the global phase invariant distance (Section \ref{sec:prelim}) as the distance metric and not the operator norm. This is because the global phase invariant distance ignores the global phase and hence avoids unnecessarily long approximating sequences  that achieve a specific global phase. This can be the reason for the fact that the bound on T-count of single-qubit Z-rotations is less in \cite{2015_KMM}, which works with this distance, compared to \cite{2015_S, 2016_RS}, that work with operator norm. (More discussions can be found in \cite{2021_M}.) This distance is composable \cite{2021_M} and has been used to synthesize unitaries in other models like topological quantum computation \cite{2014_KBS, 2021_JS}.
It is not hard to see that if $\epsilon=0$ then we get the problem of synthesizing T-count and T-depth-optimal circuits for exactly implementable unitaries. In this case, both provable \cite{2014_GKMR, 2020_MM, 2021_GMM} and much more efficient heuristic \cite{2020_MM, 2021_GMM} algorithms have been developed (see Table \ref{tab:compare} for a comparison). We say that an algorithm is \textbf{provable} if its claimed efficiency and correctness or quality of solution (in this case optimality) can be proved by rigorous arguments. An algorithm is \textbf{heuristic} if either one or both of these factors are conjectured to be true.

Any synthesis algorithm will have complexity at least $O(2^n)$, the input size. Placing further optimality constraint makes the problem even harder, in fact impractical to synthesize on a PC after a certain value of $n$. So \textbf{re-synthesis algorithms} have been developed which takes a circuit implementing a unitary and then tries to \emph{reduce} (not \emph{minimize}) a certain resource (see for example \cite{2014_AMM, 2020_GLMM, 2020_HS}). In literature, usually these algorithms do not account for the complexity of synthesizing the circuit and claim a running time of $\poly(n)$. 
A detail study about the relative merit and de-merit of synthesis and re-synthesis algorithms, is beyond the scope of this work. But we would like to point out that the importance of designing better (optimal) synthesis algorithms or studying their complexity cannot be undermined, not only for theoretical reasons but also for the various applications they can have. Apart from guaranteeing optimality, they can be used as sub-routines in compiling large unitaries \cite{2018_HRS, 2020_MSRH, 2021_M}, assess the quality of re-synthesis algorithms, etc. For example, the T-depth-optimal synthesis algorithm of \cite{2021_GMM} was able to generate optimal circuits for standard unitaries like Fredkin, Peres and Quantum OR, which could not be done by the re-synthesis methods used in \cite{2013_AMMR}. In Section \ref{sec:result} we show that we get much less T-count for widely-used multi-qubit unitaries like controlled rotation, compared to the number of T-gates obtained by compiling them first into single-qubit rotations and then replacing the T-count-optimal circuit of each such single-qubit rotation.
 
To the best of our knowledge, before this paper there was no algorithm to determine $\epsilon$-T-count or $\epsilon$-T-depth of arbitrary multi-qubit ($n\geq 1$) unitaries, considering any distance metric. Previous algorithms like \cite{2015_KMM, 2015_S, 2016_RS} synthesize $\epsilon$-T-count-optimal circuits for single qubit Z-rotations. In fact, even if we consider other discrete, finite universal gate sets like Clifford+V or Clifford+CS, there are no algorithms that work for arbitrary multi-qubit unitaries and  minimize the non-Clifford gate count/depth. Even it is not clear how to modify or generalize the methods introduced in these papers. However, our results not only work for multi-qubit unitaries but can also be applied in these alternate bases, as explained in the next section. 

\begin{table}[!htbp]
\centering 
\footnotesize
\begin{tabular}{|c|p{1.5cm}|p{1.5cm}|p{1.1cm}|c|c|c|c|}
 \hline
 Resource & Type of synthesis & Type of \newline algorithm & Distance & Reference & $\#$Qubits ($n$) & Space complexity & Time complexity \\
 \hline\hline
 \multirow{6}{*}{T-count} & \multirow{3}{*}{Exact} & \multirow{2}{*}{Provable} & \multirow{3}{*}{-} & \cite{2014_GKMR} & $>0$ & $O\left(\left(2^n\right)^m\right)$ & $O\left(\left(2^n\right)^m\right)$ \\
 \cline{5-8}
  & & & & \cite{2020_MM} & $>0$ & $O\left(\left(2^n\right)^{2\lceil\frac{m}{c}\rceil}\right)$ & $O\left(\left(2^n\right)^{2(c-1)\lceil\frac{m}{c}\rceil}\right)$  \\
  \cline{3-8}
   & & Heuristic & - & \cite{2020_MM} & $>0$ & $\poly(m,2^n)$ & $\poly(m,2^n)$ \\
   \cline{2-8}
   & \multirow{3}{*}{Approx} & \multirow{3}{*}{Provable} & Op. norm & \cite{2016_RS} & $=1$ ($R_z$) & $\poly(2^n)$ & $\poly(\log(1/\epsilon))$  \\
   \cline{4-8}
   & & & Global Ph. Inv. & \cite{2015_KMM} & $=1$ ($R_z$) & $\poly(2^n)$ & $\exp(1/\epsilon)$  \\
   \cline{4-8}
   & & & Global Ph. Inv. & This work & $>0$ & $O(2^{2n})$ &$O\left(2^{2nm_{\epsilon}+4n}\right)$  \\
   \hline\hline
   \multirow{3}{*}{T-depth} & \multirow{2}{*}{Exact} & Provable & - & \cite{2021_GMM} & $>0$ & $O\left(\left(4^{n^2}\right)^{\lceil\frac{d}{c}\rceil}\right)$ & $O\left(\left(4^{n^2}\right)^{(c-1)\lceil\frac{d}{c}\rceil}\right)$  \\
  \cline{3-8}
   & & Heuristic & - & \cite{2021_GMM} & $>0$ & $\poly\left(d,n2^{5.6n}\right)$ & $\poly\left(d,n2^{5.6n}\right)$  \\
   \cline{2-8}
   & Approx & Provable & Global Ph. Inv. & This work & $>0$ & $O\left(n2^{5.6n}\right)$ & $O\left(2^{2n^2d_{\epsilon}+4n}\right)$ \\
   \hline\hline
   \multirow{2}{*}{Depth} & \multirow{2}{*}{Exact} & \multirow{2}{*}{Provable} & \multirow{2}{*}{-} & \cite{2013_AMMR} & $>0$ & $O\left(\left(2^{kn^2}\right)^{d'/2}\right)$ & $O\left(\left(2^{kn^2}\right)^{d'/2}\right)$ \\
   \cline{5-8}
  & & & & \cite{2021_GMM} & $>0$ & $O\left(\left(2^{kn^2}\right)^{\lceil\frac{d'}{c}\rceil}\right)$ & $O\left(\left(2^{kn^2}\right)^{\lceil(c-1)\frac{d'}{c}\rceil}\right)$ \\
  \hline
\end{tabular}
\caption{Complexity of some state-of-the-art optimal synthesis algorithms. The first column specifies the resouce to be optimized. In the fourth column Op. norm and Global Ph. Inv. denote the operator norm and global phase invariant distance respectively. In the sixth column \cite{2016_RS} and \cite{2015_KMM} give optimal results only for 1-qubit $R_z(\theta)$ gates. In the last two columns $c\geq 2$ and $k$ are constants. $m_{\epsilon}$ and $d_{\epsilon}$ are the $\epsilon$-T-count and $\epsilon$-T-depth of the input approximately implementable unitary respectively. The corresponding terms for exactly implementable unitaries ($\epsilon=0$) are $m$ and $d$ respectively. }
\label{tab:compare}
\end{table}

\subsection{Our contributions}

In this paper we give algorithms that can be used to synthesize (provably) $\epsilon$-T-count and $\epsilon$-T-depth-optimal circuits. We treat arithmetic operations on the entries of a unitary at unit cost and we do not account for the bit-complexity associated with specifying or manipulating them. 
Suppose the input $n$-qubit unitary is $W$, having size $2^n\times 2^n$. Then the space complexity of our algorithms, described in Sections \ref{sec:provable} and \ref{sec:depth}, is $\poly(2^n)$. The time complexity has an exponential dependence on $\teps(W)$ or $\tdeps(W)$ while synthesizing $\epsilon$-T-count or $\epsilon$-T-depth optimal circuit respectively (Table \ref{tab:compare}).

For the design and analysis of our algorithm the following results (Section \ref{sec:provable}) have been crucial. Suppose $E$ is a unitary that is close to $\id$ in the global phase invariant distance i.e. $d(E,\id)\leq\epsilon$. $C_0$ is an $n$-qubit Clifford operator. $EC_0$ \emph{behaves almost like Clifford} $C_0$ i.e. it \emph{approximately inherits} some characteristics from $C_0$. First, expanding both $EC_0$ and $C_0$ in the Pauli basis, we can see that the absolute value of the coefficients (at each point) are \emph{almost equal} (\textbf{amplitude test}). Second, if we expand $(EC_0)P(EC_0)^{\dagger}$ in the Pauli basis then the absolute value of the coefficients is almost $1$ at $P'$ if $C_0PC_0^{\dagger}=P'$, and nearly zero at other points (\textbf{conjugation test}).  
These results may be of independent interest and can be used for resource-optimal synthesis in other bases as described below.

Most discrete universal gate sets consist of Clifford gates and one non-Clifford gate. Consider one such set Clifford+A, where A is a non-Clifford gate and let $U_A$ is a unitary exactly implementable by this set. Since usually the cost of fault-tolerantly implementing the non-Clifford gate is higher, so we are required to optimize the A-count or A-depth, which are defined analogous to T-count and T-depth. One of the tricks in many resource-optimal-synthesis algorithms is to find a nice generating set $\mathcal{G}$ such that it has finite cardinality and $U_A$ can be decomposed as follows.
\begin{eqnarray}
 U_A&=&e^{i\phi}\left(\prod_{i=m}^1G_i\right)C_0\qquad G_i\in\mathcal{G}, C_0\in\cliff_n,\phi\in[0,2\pi), m=\text{A-count/A-depth}
\end{eqnarray}
Each $G_i\in\mathcal{G}$ has a specific property : A-count 1 or A-depth 1. 
Then we design a search algorithm to search for products of $G_i$ such that, we get $U_AC_0^{-1}$ (up to a global phase), or rather $U_A\left(\prod_iG_i\right)^{-1}$ is a Clifford.  We know that for any discrete finite universal gate set not all unitaries are exactly implementable. Let $V_A$ be one such unitary. $U_A$ is a unitary such that $d(U_A,V_A)\leq\epsilon$ and it has the minimum A-count or A-depth among all exactly implementable unitaries within $\epsilon$ distance of $V_A$. Then we can perform amplitude test (Theorem \ref{thm:coeffEC}) and conjugation test (Theorem \ref{thm:conj}, Corollary \ref{cor:conjCliff}) and get an $\epsilon$-A-count-optimal or $\epsilon$-A-depth-optimal decomposition of $V_A$. So it will be interesting and useful to find such nice generating set for other bases, as has been found for Clifford+T \cite{2014_GKMR} (T-count), \cite{2021_GMM} (T-depth) and Clifford+CS \cite{2021_GRT} (CS-count, only for 2-qubit unitaries). One simple way of constructing $\mathcal{G}$ for count-optimality is to write $U_A$ as follows.
\begin{eqnarray}
 U_A&=&e^{i\phi}C_1A_{(q_1)}C_2A_{(q_2)}C_3\ldots C_mA_{(q_m)}C_0   \nonumber \\
 &=&e^{i\phi}\left(C_1A_{(q_1)}C_1^{\dagger}\right)\left(C_1C_2A_{(q_2)}C_2^{\dagger}C_1^{\dagger}\right)\ldots\left(C_1\ldots C_mA_{(q_m)}C_m^{\dagger}\ldots C_1^{\dagger}\right)C_1\ldots C_mC_0 \nonumber \\
 &=&e^{i\phi}\left(C_1A_{(q_1)}C_1^{\dagger}\right)\left(C_2'A_{(q_2)}C_2'^{\dagger}\right)\ldots\left(C_m'A_{(q_m)}C_m'^{\dagger}\right)C_0'   \nonumber
\end{eqnarray}
In the above $A_{(q_i)}$ denotes the A-gate applied on qubit $q_i$. $C_0,\ldots, C_m, C_0'\ldots, C_m'\in\cliff_n$. So $\mathcal{G}=\{CA_{(q_i)}C^{\dagger}:C\in\cliff_n, i\in [n]\}$, which is obviously finite since the Clifford group is finite \cite{1998_CRSS,2008_O}. It is possible to get more compact sets by exploiting other algebraic properties (for example, \cite{2014_GKMR}). A generating set for depth-optimality can be constructed by conjugating products of at least $n$ A-gates on distinct qubits by Clifford, as has been done in \cite{2021_GMM}.

From Table \ref{tab:compare} we see that for exactly implementable unitaries the provable algorithms like \cite{2014_GKMR, 2020_MM, 2021_GMM} had an exponential dependence on T-count and T-depth. Significant improvements were achieved in \cite{2020_MM, 2021_GMM}, where heuristics were designed that led to algorithms with a polynomial dependence on T-count and T-depth. The algorithm in our paper also suffer from exponential dependence on $\epsilon$-T-count and $\epsilon$-T-depth, which usualy have an inverse dependence on $\epsilon$ i.e. $\teps, \tdeps \propto f(1/\epsilon)$. We conjecture that for approximately implementable unitaries it is not possible to have algorithms with a polynomial dependence on these parameters.

\begin{conjecture}
 It is not possible to have $\epsilon$-T-count or $\epsilon$-T-depth-optimal synthesis algorithms with complexity $\poly\left(2^n,\teps\right)$ or $\poly\left(2^n,\tdeps\right)$ respectively.
 \label{conj:approx}
\end{conjecture}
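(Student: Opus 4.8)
This is a hardness (impossibility) statement, so a proof must take the form of a complexity lower bound: either an unconditional bound in a restricted model of computation, or a reduction from a problem widely believed to be intractable. The plan is to pursue both, since each isolates a different difficulty. First I would fix the decision version of the problem --- given $(W,\epsilon,k)$, decide whether $\teps(W)\le k$ --- and observe that, in the unit-cost arithmetic model used throughout, it lies in $\NP$: a circuit with at most $k$ T-gates is a witness whose validity can be certified using the amplitude test (Theorem \ref{thm:coeffEC}) together with the conjugation test (Theorem \ref{thm:conj}, Corollary \ref{cor:conjCliff}). The conjecture then amounts to asserting that this problem, with $\teps$ as a parameter, admits no algorithm running in time jointly polynomial in $2^n$ and $\teps$.

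For the reduction route I would regard the set of exactly-implementable unitaries of bounded T-count as a discrete, highly structured subset of the projective unitary group --- essentially a Cayley graph generated by $\mathcal{G}=\{C\T_{(q)}C^{\dagger}\}$ --- so that \emph{finding the minimum-T-count unitary within $\epsilon$ of $W$} becomes a closest-vector search over this set. The aim is to embed a \cvp (or \svp) instance over the ring underlying Clifford+T arithmetic so that short lattice vectors correspond to low-T-count approximants; a $\poly(2^n,\teps)$ synthesis algorithm would then solve the lattice problem far too quickly, contradicting standard hardness assumptions. For an unconditional bound I would instead restrict to algorithms that access $W$ only through the two tests above, as ours does, and argue by packing: the number of T-count-$\le m$ unitaries grows like $b^{m}$ with $b=\tb(|\mathcal{G}|)$, and these are $\epsilon$-separated with $\epsilon\sim b^{-m/(4^n-1)}$, so one can build an adversarial family of targets for which the outcomes of any $\utb(b^{\teps})$ tests remain consistent with many distinct values of $\teps$, forcing $\lb(b^{\teps})$ queries --- exponential in $\teps$.

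The hard part, and the reason the statement is left as a conjecture, is that the single-qubit case is efficiently solvable given number-theoretic oracles: Ross--Selinger \cite{2016_RS} achieve $\poly(\log(1/\epsilon))$ time, so no reduction may route through single qubits, and any reduction must avoid the arithmetic structure those algorithms exploit --- the very structure the global-phase-invariant metric appears to disrupt but which I cannot prove is absent in general. The packing argument, for its part, constrains only test-based algorithms and says nothing about methods that exploit the explicit algebraic form of $W$; upgrading it to a bound against all algorithms, or exhibiting a metric-compatible lattice problem that provably resists Ross--Selinger-type shortcuts, is exactly the gap that keeps the result conjectural.
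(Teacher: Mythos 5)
The statement you were asked to prove is Conjecture~\ref{conj:approx}, and the paper contains no proof of it: the authors state it as a conjecture, offer only the empirical observation that their own algorithm (and all prior exact-synthesis algorithms) have exponential dependence on $\teps$ or $\tdeps$, and in the Conclusion explicitly pose finding ``a complexity theoretic argument to prove or disprove'' it as an open problem. There is therefore no proof in the paper to compare your attempt against, and your proposal --- which is candid that both of its routes end in gaps --- correctly reflects the actual status of the statement: it is a research plan, not a proof, and the paper supplies nothing more either.

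On the substance of your sketch, your self-diagnosis of the obstructions is accurate, and I would add two cautions. First, the packing/adversary route cannot be ``upgraded'' in the way one might hope: in the problem as posed, $W$ is given to the algorithm as an explicit $2^n\times 2^n$ matrix, so any information-theoretic or query-style lower bound (against algorithms that access $W$ only through amplitude and conjugation tests, as in Theorems~\ref{thm:coeffEC} and~\ref{thm:conj}) says nothing about algorithms that read the entries of $W$ directly; an unconditional version of the conjecture would contradict nothing we can currently prove, so realistically any proof must be conditional on a standard hardness assumption, which pushes all the weight onto your reduction route. Second, even the $\NP$-membership observation needs care: the clean certificate is the sequence of Paulis $P_1,\ldots,P_k$ \emph{together with} the trailing Clifford $C_0$, after which one checks $d\bigl(\prod_i R(P_i)\,C_0,\,W\bigr)\leq\epsilon$ directly by a trace computation; relying on the amplitude and conjugation tests alone certifies closeness to a Clifford only up to the slack in those theorems, and the unit-cost arithmetic model hides genuine precision issues. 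Neither caution is a defect relative to the paper --- both you and the authors leave the conjecture exactly where it stands: open.
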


However, from the improvements in T-count obtained by us (Section \ref{sec:result}), we feel it is important to design efficient multi-qubit resource-optimal synthesis algorithms. In many practical quantum algorithms it is not too difficult to decompose a large unitary into smaller ones. We can apply composability rules (for example see \cite{2021_M} for global phase invariant distance) and distribute the errors among these small unitaries, such that the overall error remains within the desired bound. The complexity of resource-optimal synthesis algorithms will determine the maximum size of the component unitaries in a decomposition. The larger the components, the better the resource estimates, as evident from the results in our paper (Section \ref{sec:result}).

\subsection{Related work}

Much work has been done to synthesize a circuit for any multi-qubit unitary (without provable optimality on any resource) \cite{1997_K, 2002_KSVV, 2006_DN, 2010_NC, 2011_F, 2013_KMM2, 2015_S, 2020_dBBVA, 2021_MIC}. 
Comparitively little has been done for arbitrary multi-qubit unitaries, when additional constraints are imposed, like minimizing the T-count or T-depth. To the best of our knowledge, all the previous works for approximately implementable unitaries, have been exclusively for single-qubit unitaries, in fact specifically for $R_z(\theta)$ gates. 
They work with either operator norm \cite{2015_S, 2016_RS} or global phase invariant distance \cite{2015_KMM}.
However, considerable amount of work has been done to synthesize T-count and T-depth-optimal circuits for exactly implementable multi-qubit unitaries. These include algorithms with provable optimality like \cite{2014_GKMR, 2020_MM, 2021_GMM} that employ meet-in-the-middle (MITM) and nested MITM techniques, as well as more efficient heuristic algorithms whose optimality depend on some conjecture \cite{2020_MM, 2021_GMM}. A crisp summary of the complexity of some state-of-the-art optimal synthesis algorithms has been given in Table \ref{tab:compare}. 

Work has also been done to approximate single-qubit unitaries in Clifford+V \cite{2015_R, 2013_BGS, 2015_BBG, 2015_KBRY} basis and perform a CS-count-optimal (exact) synthesis \cite{2021_GRT} of two-qubit unitaries in Clifford+CS basis. 

\subsection{Organization}

Some necessary preliminary definitions and results have been given in Section \ref{sec:prelim}. We describe our algorithm in Section \ref{sec:provable} and provide implementation results in Section \ref{sec:result}. In Section \ref{sec:depth} we discuss about T-depth-optimality.
Finally we conclude in Section \ref{sec:conclude}.

\section{Preliminaries}
\label{sec:prelim}

We write $[K]=\{1,2,\ldots,K\}$. We denote the $n\times n$ identity matrix by $\id_n$ or $\id$ if dimension is clear from the context. We denote the set of $n$-qubit unitaries by $\mathcal{U}_n$. The size of an $n$-qubit unitary is $N\times N$ where $N=2^n$. 
We have given detail description about the n-qubit Pauli operators ($\pauli_n$), Clifford group ($\cliff_n$) and the group ($\clifft_n$) generated by the Clifford and $\T$ gates in Appendix \ref{app:clifford} and \ref{app:Jn}. 
In this section we give some additional definitions and results required for the rest of the paper.

\subsection{The group generated by Clifford and $\T$ gates}

We observe the following when expanding a Clifford in the Pauli basis. 
\begin{fact}[\cite{2010_BS}]
 If $C\in\cliff_n$ then for each $P\in\pauli_n$ $\exists r_P\in\cmplx$, such that $C=\sum_{P\in\pauli_n}r_PP$. Further, if $r_P, r_{P'}\neq 0$ for any pair of $P,P'$, then $|r_P|=|r_{P'}|=r$, for some $r\in\real$.
 \label{fact:cliffCoeff}
\end{fact}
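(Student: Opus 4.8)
The first assertion is routine. The $4^n$ Hermitian phaseless $n$-qubit Paulis form an orthogonal basis of the $N^2$-dimensional space of $N\times N$ matrices under the Hilbert--Schmidt inner product, since $\tr(PP')=N\delta_{P,P'}$ for such $P,P'$. Hence every $C\in\cliff_n$ expands as $C=\sum_P r_P P$ with $r_P=\tr(PC)/N$, and the entire problem reduces to controlling the magnitudes $|r_P|=|\tr(PC)|/N$.

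My plan for the magnitude claim rests on two trace manipulations that exploit the defining property of a Clifford: $CQC^\dagger$ is again a signed Pauli for every Pauli $Q$. Write $\sigma(Q)$ for its phaseless representative, so $\sigma$ is an automorphism of the phaseless Pauli group $\cong\mathbb{F}_2^{2n}$, induced by an $\mathbb{F}_2$-linear map $S$. First I would evaluate $\tr(PCQ)$ two ways --- by cyclicity as $\tr(QPC)$, and via $CQ=\pm\sigma(Q)C$ as $\tr(P\sigma(Q)C)$ up to a unit phase --- to obtain $|r_{\conj{PQ}}|=|r_{\conj{P\sigma(Q)}}|$ for all $P,Q$. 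Since $\conj{PQ}$ and $\conj{P\sigma(Q)}$ differ additively by $(\mathbb{I}+S)Q$, this shows $|r_{\cdot}|$ is constant on each coset of the subgroup $W:=\mathrm{Im}(\mathbb{I}+S)$, so the support of $C$ is a union of full cosets of $W$. It then remains to show the support meets only one coset.

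That last point is the crux and, I expect, the \emph{main obstacle}. I would first reduce to $\tr(C)\neq 0$: replacing $C$ by $P_0C$ for some $P_0$ in the support merely permutes $\{|r_P|\}$ and translates the support, so I may assume $\id$ lies in the support. Under this normalization I claim every $\sigma$-fixed Pauli $Q$ (one with $CQC^\dagger=\pm Q$) is fixed with the $+$ sign: inserting $Q^2=\id$ into $\tr(C)$ and using $CQ=\pm QC$ gives $\tr(C)=\pm\tr(C)$, which forces the sign to be $+$ precisely because $\tr(C)\neq 0$. Consequently, for any Pauli $P$ anticommuting with such a $Q$, the identity $\tr(PC)=\tr(QPQC)=\tr(-PC)$ forces $r_P=0$. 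Hence the support lies among the Paulis commuting with every $\sigma$-fixed Pauli, i.e. in the symplectic complement of $\ker(\mathbb{I}+S)$, which one checks equals $W$ and contains $\id$. Together with the coset-magnitude relation this gives $\text{support}=W$ with a single constant magnitude, whose value $r=1/\sqrt{|W|}$ is then read off from $\sum_P|r_P|^2=\tr(C^\dagger C)/N=1$. The delicate part throughout is the sign and symplectic-orthogonality bookkeeping; the genuinely non-obvious ingredient is that $\tr(C)\neq 0$ upgrades every $\sigma$-fixed Pauli to a true $+1$-eigen-conjugate, without which the vanishing argument fails.
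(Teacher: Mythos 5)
Your proof is correct; note at the outset that the paper itself offers no proof of this statement --- it is imported verbatim as a fact with a citation to \cite{2010_BS} --- so there is no internal argument to compare against, and a self-contained derivation is a genuine addition. Each step of yours checks out: (i) evaluating $\tr(PCQ)$ by cyclicity and via $CQ=\pm\sigma(Q)C$ gives $|r_{\overline{PQ}}|=|r_{\overline{P\sigma(Q)}}|$, hence invariance of $|r_{\cdot}|$ under translation by $W=\mathrm{Im}(\id+S)$, since $\overline{PQ}$ and $\overline{P\sigma(Q)}$ differ by $(\id+S)q$; (ii) the normalization to $\tr(C)\neq 0$ is sound, because left-multiplying by a Pauli $P_0$ in the support permutes the multiset $\{|r_P|\}$ (so flatness for $P_0C$ implies flatness for $C$) and leaves the induced map $S$ unchanged, as $(P_0C)Q(P_0C)^{\dagger}=P_0(\pm\sigma(Q))P_0=\pm\sigma(Q)$; (iii) for $CQC^{\dagger}=\epsilon Q$ with $\epsilon=\pm 1$, the identity $\tr(C)=\tr(CQ\cdot Q)=\epsilon\tr(QCQ)=\epsilon\tr(C)$ indeed forces $\epsilon=+1$ when $\tr(C)\neq 0$; (iv) if $P$ anticommutes with such a $Q$ then $\tr(PC)=\tr(PQCQ)=\tr(QPQ\,C)=-\tr(PC)=0$; and (v) the inclusion $\mathrm{Im}(\id+S)\subseteq\ker(\id+S)^{\perp}$ follows from $\omega(x,(\id+S)y)=\omega(x,y)+\omega(Sx,Sy)=2\omega(x,y)=0$ for $Sx=x$ (with $\omega$ the symplectic form encoding commutation), and rank--nullity plus nondegeneracy upgrades it to equality. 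A nonempty union of cosets of $W$ contained in $W$ and containing $0$ must equal $W$, so the support is a single coset on which $|r_{\cdot}|$ is constant, and $\sum_P|r_P|^2=1$ gives $r=1/\sqrt{|W|}$, consistent with Equation (2) of the paper (there $M=|W|$). For context, the usual proofs in the literature (in the spirit of the cited source) obtain this from the structure theory of stabilizer states --- the Choi state of a Clifford is a stabilizer state, whose amplitudes in the Bell/Pauli basis are flat over an affine subspace --- whereas your route stays entirely at the level of trace identities and $\mathbb{F}_2$-symplectic linear algebra; what it buys is an elementary, self-contained argument whose only nontrivial ingredients are the coset-invariance and the observation, which you correctly single out as the crux, that $\tr(C)\neq 0$ forces every $\sigma$-fixed Pauli to be fixed with a $+$ sign.
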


\begin{fact}
Let $Q=\sum_{P\in\pauli_n}q_PP$ be the expansion of a matrix $Q$ in the Pauli basis. Then
$$q_P=\tr\left(QP\right)/N.
$$
Further, if $Q$ is a unitary then
$
    \sum_{P\in\pauli_n}\left|q_P\right|^2=1
$
 \label{fact:trCoeff}
\end{fact}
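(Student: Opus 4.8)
The plan is to rely on the single structural fact that the $n$-qubit Pauli operators are orthogonal with respect to the Hilbert--Schmidt (trace) inner product. Concretely, I would first establish that for $P,P'\in\pauli_n$ (taken as the Hermitian tensor products of $\{\id,\X,\Y,\Z\}$, so that $P^{\dagger}=P$ and $P^2=\id$) one has $\tr(PP')=N\,\delta_{P,P'}$. This reduces to two observations: the identity $\id^{\otimes n}$ has trace $N=2^n$, while every other Pauli is a nontrivial tensor product containing at least one traceless factor $\X,\Y$, or $\Z$, and is therefore traceless by multiplicativity of the trace over tensor factors ($\tr(A\otimes B)=\tr(A)\tr(B)$). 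For the off-diagonal case $P\neq P'$, the product $PP'$ equals, up to a unit phase, a Pauli that is not the identity, so $\tr(PP')=0$; the diagonal case $P=P'$ gives $\tr(P^2)=\tr(\id)=N$.

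Given this orthogonality relation, the first identity is immediate. Writing $Q=\sum_{P'\in\pauli_n}q_{P'}P'$, I would multiply on the right by a fixed $P\in\pauli_n$ and take the trace, using linearity:
$$\tr(QP)=\sum_{P'\in\pauli_n}q_{P'}\,\tr(P'P)=\sum_{P'\in\pauli_n}q_{P'}\,N\,\delta_{P',P}=N\,q_P,$$
which rearranges to $q_P=\tr(QP)/N$.

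For the normalization identity, I would use that a unitary $Q$ satisfies $Q^{\dagger}Q=\id$, hence $\tr(Q^{\dagger}Q)=\tr(\id)=N$. Expanding both $Q$ and its adjoint in the Pauli basis and again invoking orthogonality (together with $P^{\dagger}=P$) gives
$$N=\tr(Q^{\dagger}Q)=\sum_{P,P'\in\pauli_n}\conj{q_P}\,q_{P'}\,\tr(PP')=N\sum_{P\in\pauli_n}\left|q_P\right|^2,$$
and dividing by $N$ yields $\sum_{P}|q_P|^2=1$. The only point requiring care --- and the main (minor) obstacle --- is fixing the correct convention for $\pauli_n$ so that it is genuinely an orthogonal basis: one must use the Hermitian representatives without the $\pm 1,\pm i$ phase factors, since otherwise the set is linearly dependent and the inner products $\tr(PP')$ acquire spurious phases. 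Once that convention is pinned down, the argument is purely a computation with the trace inner product and invokes nothing beyond the orthogonality relation above.
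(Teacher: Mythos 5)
Your proposal is correct and follows essentially the same route as the paper: both arguments rest on the trace orthogonality of the Pauli basis, extracting $q_P$ by multiplying by $P$ and tracing, and obtaining the normalization by expanding $Q^{\dagger}Q=\id$ (the paper uses $QQ^{\dagger}=\id$, which is the same computation) and killing the cross terms $PP'$ because they are traceless. Your write-up is merely more explicit than the paper's --- you spell out the relation $\tr(PP')=N\,\delta_{P,P'}$ and the step of multiplying by $P$ before taking the trace, which the paper's one-line proof of the first identity leaves implicit.
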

The proof has been given in Appendix \ref{app:clifford} (Fact \ref{app:fact:trCoeff})
Now consider a Clifford $C$ which has an expansion, as given in Fact \ref{fact:cliffCoeff}. Let there be $M$ ($\leq N^2$) such non-zero coefficients. Since $C$ is a unitary, so we apply Fact \ref{fact:trCoeff} and get the following.
\begin{eqnarray}
 Mr^2=1 \quad\implies r=\frac{1}{\sqrt{M}}\geq \frac{1}{N}.
 \label{eqn:Mr2}
\end{eqnarray}

A unitary $U$ is \textbf{exactly implementable} if there exists a Clifford+T circuit that implements it (up to some global phase), else it is \textbf{approximately implementable}. Specifically, we say $W$ is $\epsilon$-\textbf{approximately implementable} if there exists an exactly implementable unitary $U$ such that $d(U,W)\leq\epsilon$. The Solovay-Kitaev algorithm \cite{1997_K, 2006_DN} guarantees that any unitary is $\epsilon$-approximately implementable, for arbitrary precision $\epsilon\geq 0$.
We denote the set of exactly implementable unitaries by $\clifft_n$. In this paper we use the following distance measure $d(.,.)$, which has been used in previous works like \cite{2011_F, 2015_KMM} (qubit based computing), \cite{2014_KBS, 2021_JS} (topological quantum computing).
\begin{definition}[\textbf{Global phase invariant distance}]
 Given two unitaries $U,W\in\mathcal{U}_n$, we define the global phase invariant distance between them as follows.
 \begin{eqnarray}
  d(U,W)=\sqrt{1-\frac{\left|\tr\left(U^{\dagger}W\right)\right|}{N}}   \nonumber
 \end{eqnarray}
\end{definition}

Composability of this distance with respect to tensor product and multiplication of unitaries, has been derived in \cite{2021_M}. This implies that if $U=\prod_i\left(\otimes_jU_{ij}\right)$, $V=\prod_i\left(\otimes_jV_{ij}\right)$ and $d(U_{ij},V_{ij})\leq\epsilon_{ij}$ then we can upper bound $d(U,V)$ as function of $\epsilon_{ij}$.

\subsection{T-count of circuits and unitaries}
\label{subsec:TcountDepth}

\subsubsection*{T-count of circuits}

The \emph{T-count of a circuit} is the number of T-gates in it. 

\subsubsection*{T-count of exactly implementable unitaries}

The \emph{T-count of an exactly implementable unitary} $U\in\clifft_n$, denoted by $\tcount(U)$, is the minimum number of T-gates required to implement it (up to a global phase). 

In \cite{2014_GKMR} the authors proved the following decomposition result, by which any exactly implementable unitary over the Clifford+T set can be written as a product of T-count 1 unitaries.
\begin{theorem}[Proposition 1 in \cite{2014_GKMR} (re-stated)]
 For any $U\in\clifft_n$ there exists a phase $\phi\in[0,2\pi)$, Cliffords $C_i\in\cliff_n$ and Paulis $P_i\in\pauli_n\setminus\{\id\}$ for $i\in[\tcount(U)]$ such that
\begin{eqnarray}
 U=e^{i\phi} \Big(\prod_{i=\tcount(U)}^{1} R(P_i) \Big)C_0
\end{eqnarray} 
where $R(P_i)=C_iT_{(q_i)}C_i^{\dagger} = \frac{1}{2}(1+e^{\frac{i\pi}{4}})\id+\frac{1}{2}(1-e^{\frac{i\pi}{4}})C_iZ_{(q_i)}C_i^{\dagger} = \frac{1}{2}(1+e^{\frac{i\pi}{4}})\id+\frac{1}{2}(1-e^{\frac{i\pi}{4}})P_i$. 
\label{thm:decompose}
\end{theorem}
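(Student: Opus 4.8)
The plan is to read off a normal form from a T-count--optimal circuit for $U$ and then rearrange it by the same conjugation trick already displayed in the introduction (with $\T$ in the role of the generic gate $A$). Since $U\in\clifft_n$, the quantity $\tcount(U)$ is defined and, by its very definition, there is a Clifford+T circuit implementing $U$ up to some global phase $e^{i\phi}$ that uses exactly $m:=\tcount(U)$ T-gates. Serializing this circuit --- ordering the gates in time and inserting identity Cliffords wherever two T-gates are adjacent or a T-gate sits at an end --- I would write
\[
 U=e^{i\phi}\,D_m T_{(q_m)}D_{m-1}T_{(q_{m-1})}\cdots D_1 T_{(q_1)}D_0,
\]
where $D_0,\dots,D_m\in\cliff_n$ are the intervening Clifford blocks and $T_{(q_k)}$ is a single $T$-gate on qubit $q_k$.

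Next I would telescope the Cliffords to the right. Setting $C_k:=D_mD_{m-1}\cdots D_k$ for $k\in[m]$ and inserting $C_k^{\dagger}C_k=\id$ around each $T$-gate, the cross terms collapse because $C_k^{\dagger}C_{k-1}=D_{k-1}$, leaving
\[
 U=e^{i\phi}\Big(\prod_{k=m}^{1}C_k T_{(q_k)}C_k^{\dagger}\Big)C_0,\qquad C_0:=D_mD_{m-1}\cdots D_1D_0\in\cliff_n.
\]
Defining $R(P_k):=C_k T_{(q_k)}C_k^{\dagger}$ then produces exactly $m=\tcount(U)$ factors followed by one trailing Clifford, as required.

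It remains to verify the closed form of each factor. Because the single-qubit $T$-gate is diagonal, a one-line check of the two diagonal entries gives $T=\tfrac12(1+e^{i\pi/4})\id+\tfrac12(1-e^{i\pi/4})Z$; promoting this to qubit $q_k$ and conjugating by $C_k$, and using that conjugation is linear and fixes $\id$, yields
\[
 R(P_k)=\tfrac12(1+e^{i\pi/4})\id+\tfrac12(1-e^{i\pi/4})\,C_k Z_{(q_k)}C_k^{\dagger}.
\]
I would then set $P_k:=C_k Z_{(q_k)}C_k^{\dagger}$. Since Cliffords normalize the Pauli group, $P_k\in\pauli_n$; and since conjugation preserves the spectrum while $Z_{(q_k)}$ has eigenvalues $\pm1$ in equal numbers, $P_k\neq\pm\id$, so in particular $P_k\in\pauli_n\setminus\{\id\}$.

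The argument is essentially bookkeeping, the only structural input being that Cliffords map Paulis to Paulis. The step I would watch most carefully is the reduction to the strict alternating normal form with exactly $\tcount(U)$ plain $T$-gates: I must argue that inverse $T$-gates do not spoil the count, which holds because $T^{\dagger}=S^{-1}T$ with $S^{-1}\in\cliff_n$, so every $T^{\pm1}$ is a single $T$ up to a Clifford that can be folded into an adjacent block. Once this normal form is secured, the telescoping and the diagonal decomposition of $T$ are routine.
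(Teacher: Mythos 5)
Your construction follows the standard route (it is in essence the argument behind Proposition~1 of \cite{2014_GKMR}; the paper under review only cites that result and contains no proof of its own): serialize a T-count-optimal circuit into alternating Clifford/T layers, telescope the Cliffords to the right via $C_k=D_m\cdots D_k$, and expand $\T=\tfrac12(1+e^{i\pi/4})\id+\tfrac12(1-e^{i\pi/4})\Z$. The serialization, the telescoping identity, the closed form for a conjugated $\T$, and the reduction of $\T^{\dagger}$ gates to $\T$ gates are all correct.

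There is, however, a genuine gap at the final step. You assert that $P_k:=C_k\Z_{(q_k)}C_k^{\dagger}\in\pauli_n$ ``since Cliffords normalize the Pauli group,'' but Cliffords map Paulis to Paulis only \emph{up to a sign} (the paper itself records $CPC^{\dagger}=(-1)^bP'$), so your argument only yields $C_k\Z_{(q_k)}C_k^{\dagger}=\pm P_k$ with $P_k\in\pauli_n$. When the sign is $-1$ the factor is
\begin{equation*}
C_k\T_{(q_k)}C_k^{\dagger}=\tfrac12(1+e^{i\pi/4})\id-\tfrac12(1-e^{i\pi/4})P_k=e^{i\pi/4}R(P_k)^{\dagger},
\end{equation*}
which is \emph{not} of the form $\tfrac12(1+e^{i\pi/4})\id+\tfrac12(1-e^{i\pi/4})P$ for any unsigned $P\in\pauli_n$ (the spectra $\{1,e^{-i\pi/4}\}$ versus $\{1,e^{i\pi/4}\}$ do not match even after a global phase), so the decomposition you build does not satisfy the theorem's requirements. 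A concrete failure: for the one-T-gate circuit $U=\X\T\X$ your recipe gives $C_1=\X$ and $\X\Z\X=-\Z$, and indeed $\X\T\X=R(-\Z)=\mathrm{diag}(e^{i\pi/4},1)$ equals $R(P)$ for no $P\in\{\X,\Y,\Z\}$; the theorem holds for it only through the repaired form $\X\T\X=e^{i\pi/4}R(\Z)\phase^{\dagger}$. The missing argument is exactly this repair: from $\T^{\dagger}=\T\phase^{\dagger}$ one gets $R(-P)=e^{i\pi/4}R(P)\,\bigl(C\phase^{\dagger}_{(q)}C^{\dagger}\bigr)$, i.e.\ a positive-sign factor followed by a Clifford sitting to its right; pushing that Clifford rightward through the remaining factors (conjugation by a Clifford keeps each factor of the form $R(\pm\,\mathrm{Pauli})$ and preserves the number $m$ of factors) and absorbing it into $C_0$, starting from the leftmost negative factor and iterating, eliminates all signs while changing only $\phi$ and $C_0$. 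Without this step---or an equivalent normalization of the circuit before telescoping---you have proved only the weaker statement in which the $P_i$ are signed Paulis. Ironically, the step you flagged as delicate ($\T^{\dagger}$ gates) is routine, while the sign under Clifford conjugation is the real subtlety.
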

Using Fact \ref{fact:cliffConj}, given $P$ and $Z_{(q_i)}$ we can compute (circuit for) $C_i$ efficiently such that $P=C_iZ_{(q_i)}C_i^{\dagger}$. A decomposition of $U$, as in Theorem \ref{thm:decompose}, with the minimum number of T-gates is called a \emph{T-count-optimal decomposition} of $U$.

\subsubsection*{$\epsilon$-T-count of approximately implementable unitaries}
\label{subsubsec:epsTcountDepth}

Let $W\in\mathcal{U}_n$ be an approximately implementable unitary. The \emph{$\epsilon$-T-count} of $W$, denoted by $\teps(W)$, is equal to $\tcount(U)$, the T-count of an exactly implementable unitary $U\in\clifft_n$ such that $d(U,W)\leq\epsilon$ and $\tcount(U)\leq\tcount(U')$ for any $U'\in\clifft_n$ within distance $\epsilon$ of $W$, i.e. $d(U',W)\leq\epsilon$. 
We call a T-count-optimal circuit for any such $U$ as the \emph{$\epsilon$-T-count-optimal} circuit for $W$. 

It is not hard to see that the above definitions are very general and can be applied to any unitary $W\in\mathcal{U}_n$, exactly or approximately implementable. If a unitary is exactly implementable then $\epsilon=0$. In fact, nearly all the following results can be deduced for the special case of exactly implementable unitaries by applying $\epsilon=0$.

\section{An exponential time and polynomial space algorithm}
\label{sec:provable}

In this section we describe an algorithm for determining the $\epsilon$-T-count of an $n$-qubit unitary $W\in\mathcal{U}_n$. This algorithm has space and time complexity $O\left(2^{2n}\right)$ and $O\left(2^{2n\teps(W)+4n}\right)$ respectively. First we derive some results that will help us design our algorithm.

Let $U$ be an exactly synthesizable unitary such that $d(U,W)\leq \epsilon$.
\begin{eqnarray}
 && d(W,U)=\sqrt{1-\frac{|\tr(W^{\dagger}U)|}{N}}\leq\epsilon 
 \implies |\tr(W^{\dagger}U)| \geq N(1-\epsilon^2)
 \label{eqn:dVU}
\end{eqnarray}

Let $U=\left(\prod_{i=t}^1R(P_i) \right)C_0e^{i\phi}$ for some $C_0\in\cliff_n$ and global phase $\phi$. And $W=UE$. Then from Equation \ref{eqn:dVU} we have
\begin{eqnarray}
 \left|\tr\left(E^{\dagger}\right)\right|=\left|\tr\left(E\right)\right|\geq N(1-\epsilon^2).
 \label{eqn:trE}
\end{eqnarray}
The above implies that $d(E,\id)\leq\epsilon$. We have
\begin{eqnarray}
\left|\tr\left(W^{\dagger}\left(\prod_{i=t}^1R(P_i)\right)\right)\right| &=& \left|\tr\left(E^{\dagger}U^{\dagger}\left(\prod_{i=t}^1R(P_i)\right)\right)\right| \nonumber\\
&=& \left|\tr\left(E^{\dagger}e^{-i\phi}C_0^{\dagger}\left(\prod_{i=1}^tR^{\dagger}(P_i)\right)\left(\prod_{i=t}^1R(P_i)\right)\right)\right| \nonumber \\
 &=&\left|\tr\left(E^{\dagger}C_0^{\dagger}\right)\right|=\left|\tr\left(EC_0\right)\right| \nonumber
\end{eqnarray}
and similarly
\begin{eqnarray}
 \left|\tr\left(W^{\dagger}\left(\prod_{i=t}^1R(P_i)\right)P_1\right)\right|=\left|\tr\left(E^{\dagger}C_0^{\dagger}P_1\right)\right|=\left|\tr\left(EC_0P_1\right)\right| \qquad [P_1\in\pauli_n]  \label{eqn:distCliff}
\end{eqnarray}

We now study some properties of $\left|\tr(EC_0P_1)\right|$, which will help us check if we have identified a correct $\prod_{i=t}^1R(P_i)$. For this, we prove the following theorem.
\begin{theorem}
Let $E\in\mathcal{U}_n$ be such that $\left|\tr\left(E\right)\right|\geq N\left(1-\epsilon^2\right)$, for some $\epsilon\geq 0$. $C_0=\sum_{P\in\pauli_n}r_PP$ is an $n$-qubit Clifford. If $\left|\left\{P:r_P\neq 0\right\}\right|=M$ then 
\begin{eqnarray}
\frac{1-\epsilon^2}{\sqrt{M}}-\sqrt{M}\sqrt{2\epsilon^2-\epsilon^4}&\leq&\left|\tr\left(EC_0P_1\right)/N\right|\leq \frac{1}{\sqrt{M}}+\sqrt{M}\sqrt{2\epsilon^2-\epsilon^4}\quad [\text{if } r_{P_1}\neq 0]   \label{eqn:traceRPn0} \\
\text{and}\quad 0&\leq&\left|\tr\left(EC_0P_1\right)/N\right|\leq \sqrt{M}\sqrt{2\epsilon^2-\epsilon^4}\quad [\text{if }r_{P_1}=0] \label{eqn:traceRP0}
\end{eqnarray}
 \label{thm:coeffEC}
\end{theorem}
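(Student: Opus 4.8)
The plan is to work entirely in the Pauli basis and track the single coefficient $s_{P_1} := \tr(EC_0P_1)/N$, which by Fact \ref{fact:trCoeff} is exactly the Pauli coefficient of $P_1$ in $EC_0$. First I would expand $E = \sum_{Q\in\pauli_n}e_QQ$. Unitarity gives $\sum_Q|e_Q|^2 = 1$, and the identity coefficient is $e_{\id} = \tr(E)/N$, so the hypothesis $|\tr(E)| \ge N(1-\epsilon^2)$ reads $|e_{\id}| \ge 1-\epsilon^2$. Combining these yields the quantitative heart of the argument,
$$\sum_{Q\neq\id}|e_Q|^2 = 1 - |e_{\id}|^2 \le 1-(1-\epsilon^2)^2 = 2\epsilon^2 - \epsilon^4,$$
and in particular $|e_Q|\le\sqrt{2\epsilon^2-\epsilon^4}$ for every $Q\neq\id$. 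This is the precise sense in which $E$ behaves almost like a global phase: all of its Pauli weight away from $\id$ is of order $\epsilon$.

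Next I would expand the product $EC_0 = \sum_{Q,P}e_Qr_P\,(QP)$ and collect the coefficient of $P_1$. Since $QP$ is a Pauli up to a unit-modulus phase $\mu_{Q,P}\in\{\pm 1,\pm i\}$, the trace $\tr(QPP_1)$ is nonzero exactly when $QP\propto P_1$, and for each fixed $P$ this pins down a \emph{unique} $Q=Q(P)$ (the Pauli part of $P_1P$), because left multiplication by $P_1$ is a bijection of $\pauli_n$. Restricting to the $M$ Paulis with $r_P\neq 0$, this gives
$$s_{P_1} = \sum_{P:\,r_P\neq 0}\mu_{Q(P),P}\,e_{Q(P)}\,r_P,\qquad |r_P| = \tfrac{1}{\sqrt{M}},$$
a sum of exactly $M$ terms (using Fact \ref{fact:cliffCoeff} and \eqref{eqn:Mr2} for $|r_P|$). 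The crucial observation is that $Q(P)=\id$ occurs if and only if $P = P_1$; this is the one diagonal term, carrying the large coefficient $e_{\id}$, while every other term carries an $e_{Q}$ with $Q\neq\id$, hence of magnitude at most $\sqrt{2\epsilon^2-\epsilon^4}$.

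The two cases of the theorem then fall out by the triangle inequality. If $r_{P_1}\neq 0$, the diagonal term $e_{\id}r_{P_1}$ is present, with magnitude between $(1-\epsilon^2)/\sqrt{M}$ and $1/\sqrt{M}$ (using $1-\epsilon^2\le|e_{\id}|\le 1$), while the remaining $M-1$ off-diagonal terms each contribute at most $\tfrac{1}{\sqrt M}\sqrt{2\epsilon^2-\epsilon^4}$; bounding their number crudely via $(M-1)/\sqrt M \le \sqrt M$ produces the error term $\sqrt{M}\sqrt{2\epsilon^2-\epsilon^4}$ and yields \eqref{eqn:traceRPn0}. If $r_{P_1}=0$ the diagonal term is simply absent from the sum, so all $M$ terms are off-diagonal and $|s_{P_1}| \le M\cdot\tfrac{1}{\sqrt M}\sqrt{2\epsilon^2-\epsilon^4} = \sqrt M\sqrt{2\epsilon^2-\epsilon^4}$, which is \eqref{eqn:traceRP0}.

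I expect the main obstacle to be the bookkeeping in the middle step rather than any single hard estimate: one must verify that the Pauli coefficient of $P_1$ decomposes into exactly one near-unit diagonal contribution plus $M-1$ (or $M$) uniformly small ones, which relies on the projective group structure of $\pauli_n$ guaranteeing that $P\mapsto Q(P)$ is a bijection sending $P_1$ alone to $\id$. The phases $\mu_{Q,P}$ have unit modulus and so are harmless for the magnitude bounds. I note that a Cauchy--Schwarz estimate on the off-diagonal sum would in fact give the tighter error $\sqrt{2\epsilon^2-\epsilon^4}$ without the $\sqrt{M}$ factor; the stated bound follows from the simpler term-by-term estimate, which already suffices for the algorithmic application.
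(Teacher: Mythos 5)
Your proof is correct and follows essentially the same route as the paper's: expand $E$ in the Pauli basis, use $|e_{\id}|\geq 1-\epsilon^2$ and $\sum_{Q\neq\id}|e_Q|^2\leq 2\epsilon^2-\epsilon^4$, isolate the single diagonal term $e_{\id}r_{P_1}$, and bound the remaining $M-1$ (or $M$) off-diagonal terms by $\frac{1}{\sqrt{M}}\sqrt{2\epsilon^2-\epsilon^4}$ each via the triangle inequality. Your extra bookkeeping (the phases $\mu_{Q,P}$, the bijection $P\mapsto Q(P)$) and the closing Cauchy--Schwarz remark, which indeed sharpens the error term to $\sqrt{2\epsilon^2-\epsilon^4}$, are refinements the paper leaves implicit, but the argument is the same.
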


\begin{proof}
 Since $E$ is unitary, we can expand it in the Pauli basis as 
\begin{eqnarray}
 E = \sum_{P\in\pauli_n} e_P P    
 \label{eqn:Epauli}
\end{eqnarray}
where $e_P=\tr\left(EP\right)/N$ and $\sum_{P}\left|e_P\right|^2=1$ (Fact \ref{fact:trCoeff}). Thus
\begin{eqnarray}
1\geq |e_{\id}| &=& \left|\tr\left(E\right)/N\right| \geq (1-\epsilon^2)\qquad [\text{From inequality \ref{eqn:trE}}]
\label{eqn:eI}  \\
 \text{and }\sum_{P\neq\id} |e_P|^2 &\leq& 1-(1-\epsilon^2)^2 = 2\epsilon^2-\epsilon^4  \label{eqn:eNotI_1} \\
 \implies |e_P|&\leq&\sqrt{2\epsilon^2-\epsilon^4} \qquad [\forall P\neq\id]\label{eqn:absE}
\end{eqnarray}
Since $C_0=\sum_{P\in\pauli_n}r_PP$, from Fact \ref{fact:cliffCoeff} and Equation \ref{eqn:Mr2}, we know that $|r_P|=r=\frac{1}{\sqrt{M}}$ or $r_P=0$. Then
\begin{eqnarray}
 EC_0 &=& \sum_{P\in\pauli_n} r_P EP =\sum_{\substack{P\in\pauli_n\\r_P\neq 0}}r_PEP \nonumber \\
 \text{ and }EC_0P_1&=& r_{P_1}E+\sum_{\substack{P\in\pauli_n\setminus\{P_1\}\\r_P\neq 0}}r_PEP_1' \quad\text{where }P_1\in\pauli_n\setminus\{\id\}\text{ and }PP_1=P_1'\neq\id. \nonumber   
 \end{eqnarray}
 So
 \begin{eqnarray}
 \left|\tr\left(EC_0P_1\right)\right|&=&\left|r_{P_1}\tr\left(E\right)+\sum_{\substack{P\in\pauli_n\setminus\{P_1\}\\r_P\neq 0}}r_P\tr\left(EP_1'\right)\right|   \nonumber \\
 &=&\left|\left(r_{P_1}e_{\id}+\sum_{\substack{P\in\pauli_n\setminus\{P_1\}\\r_P\neq 0}}r_Pe_{P_1'}\right)\right|N\quad [\text{Using Fact \ref{fact:trCoeff}}]  \label{eqn:trace} \\
 &\leq&\left|r_{P_1}\right|\left|e_{\id}\right|N+\sum_{\substack{P\in\pauli_n\setminus\{P_1\}\\r_P\neq 0}}\left|r_P\right|\left|e_{P_1'}\right|N \quad [\because |e_{\id}|\leq 1]   \nonumber \\
 &\leq&\left|r_{P_1}\right|N+\sum_{\substack{P\in\pauli_n\setminus\{P_1\}\\r_P\neq 0}}r\sqrt{2\epsilon^2-\epsilon^4}N   \qquad [\text{Using inequality \ref{eqn:absE}}] \label{eqn:traceUB}
\end{eqnarray}
From equation \ref{eqn:trace} we can also obtain the following lower bound.
\begin{eqnarray}
 \left|\tr\left(EC_0P_1\right)\right|&\geq&\left|r_{P_1}\right|\left|e_{\id}\right|N-\sum_{\substack{P\in\pauli_n\setminus\{P_1\}\\r_P\neq 0}}\left|r_P\right|\left|e_{P_1'}\right|N    \nonumber \\
 &\geq&\left|r_{P_1}\right|(1-\epsilon^2)N-\sum_{\substack{P\in\pauli_n\setminus\{P_1\}\\r_P\neq 0}}r\sqrt{2\epsilon^2-\epsilon^4}N   \quad [\text{Inequality \ref{eqn:absE} and \ref{eqn:eI}}] \label{eqn:traceLB}
\end{eqnarray}
Since $r=\frac{1}{\sqrt{M}}$, we prove the following inequalities. 
\begin{eqnarray}
\frac{1-\epsilon^2}{\sqrt{M}}-\sqrt{M}\sqrt{2\epsilon^2-\epsilon^4}&\leq&\left|\tr\left(EC_0P_1\right)/N\right|\leq \frac{1}{\sqrt{M}}+\sqrt{M}\sqrt{2\epsilon^2-\epsilon^4}\quad [\text{if } r_{P_1}\neq 0]   \nonumber \\
\text{and}\quad 0&\leq&\left|\tr\left(EC_0P_1\right)/N\right|\leq \sqrt{M}\sqrt{2\epsilon^2-\epsilon^4}\quad [\text{if }r_{P_1}=0] \nonumber
\end{eqnarray}
\end{proof}

Basically this theorem says that if $E$ is close to identity then distribution of absolute-value-coefficients of $EC_0$ and $C_0$ in the Pauli basis expansion, is almost similar. In fact, we can have a more general theorem that can be deduced from the calculations in Theorem \ref{thm:coeffEC}.
\begin{theorem}
Let $E\in\mathcal{U}_n$ be such that $\left|\tr\left(E\right)\right|\geq N\left(1-\epsilon^2\right)$, for some $\epsilon\geq 0$. $Q=\sum_{P\in\pauli_n}q_PP$ is an $n$-qubit unitary. Then for each $P_1\in\pauli_n$,
\begin{eqnarray}
(1-\epsilon^2)|q_{P_1}|-\sum_{P\in\pauli_n\setminus\{P_1\}}|q_P|\sqrt{2\epsilon^2-\epsilon^4} &\leq&\left|\tr\left(EQP_1\right)/N\right|\leq |q_{P_1}|+\sum_{P\in\pauli_n\setminus\{P_1\}}|q_P|\sqrt{2\epsilon^2-\epsilon^4}. \nonumber
\end{eqnarray}

 \label{thm:coeffEQ}
\end{theorem}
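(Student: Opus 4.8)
The plan is to repeat, almost verbatim, the estimate in the proof of Theorem~\ref{thm:coeffEC}, the only difference being that $Q$ is no longer assumed Clifford, so I cannot invoke Fact~\ref{fact:cliffCoeff} to replace every nonzero $|q_P|$ by a common value $r=1/\sqrt{M}$; instead the individual weights $|q_P|$ are carried through the computation. First I would expand $E=\sum_{P\in\pauli_n}e_PP$ with $e_P=\tr(EP)/N$ (Fact~\ref{fact:trCoeff}). The hypothesis $|\tr(E)|\geq N(1-\epsilon^2)$ gives $|e_{\id}|\geq 1-\epsilon^2$ exactly as in inequality~\ref{eqn:eI}, and since $\sum_P|e_P|^2=1$ this forces $\sum_{P\neq\id}|e_P|^2\leq 1-(1-\epsilon^2)^2=2\epsilon^2-\epsilon^4$, hence $|e_P|\leq\sqrt{2\epsilon^2-\epsilon^4}$ for every $P\neq\id$ (inequality~\ref{eqn:absE}). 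These two bounds depend only on $E$, not on $Q$, so they transfer unchanged.

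The key algebraic step is to isolate the identity component of $QP'$. Writing $QP'=q_{P'}(P'P')+\sum_{P\neq P'}q_P(PP')$ and using that the elements of $\pauli_n$ are Hermitian and square to $\id$, the $P=P'$ term contributes $q_{P'}\id$, while for every $P\neq P'$ the product satisfies $PP'=c_P P''$ for some phase $c_P\in\{\pm1,\pm i\}$ (so $|c_P|=1$) and some $P''\in\pauli_n$ with $P''\neq\id$. Taking the trace and dividing by $N$, Fact~\ref{fact:trCoeff} then yields the clean expression $\tr(EQP')/N = q_{P'}e_{\id}+\sum_{P\neq P'}q_P c_P e_{P''}$, in which every coefficient $e_{P''}$ appearing in the sum is a non-identity one.

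From here the two bounds follow by the triangle inequality and its reverse form, just as inequalities~\ref{eqn:traceUB} and~\ref{eqn:traceLB} were obtained. For the upper bound I would use $|e_{\id}|\leq 1$ and $|c_P|=1$ together with $|e_{P''}|\leq\sqrt{2\epsilon^2-\epsilon^4}$ to get $|\tr(EQP')/N|\leq |q_{P'}|+\sum_{P\neq P'}|q_P|\sqrt{2\epsilon^2-\epsilon^4}$; for the lower bound I would instead use $|e_{\id}|\geq 1-\epsilon^2$ with the reverse triangle inequality to get $(1-\epsilon^2)|q_{P'}|-\sum_{P\neq P'}|q_P|\sqrt{2\epsilon^2-\epsilon^4}\leq|\tr(EQP')/N|$, which is exactly the claimed inequality.

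There is no deep obstacle here, since the statement is a direct generalization of Theorem~\ref{thm:coeffEC}; the only point requiring a little care is the bookkeeping in the key step, namely confirming that $PP'\neq\id$ whenever $P\neq P'$ (so that each cross term genuinely carries the small factor $\sqrt{2\epsilon^2-\epsilon^4}$ rather than the large $|e_{\id}|$), and that the multiplication phases $c_P$ have modulus one and therefore drop out under the absolute values. Specializing to $Q=C_0$ a Clifford, where $|q_P|\in\{0,r\}$ with $r=1/\sqrt{M}$ and exactly $M$ nonzero terms, recovers Theorem~\ref{thm:coeffEC}, which serves as a sanity check on the constants.
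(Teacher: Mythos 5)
Your proposal is correct and follows essentially the same route as the paper, which does not write out a separate proof of Theorem~\ref{thm:coeffEQ} but states that it ``can be deduced from the calculations in Theorem~\ref{thm:coeffEC}''; your argument is precisely that deduction, replacing the common Clifford weight $r=1/\sqrt{M}$ by the individual $|q_P|$ and reusing inequalities~\ref{eqn:eI} and~\ref{eqn:absE} with the triangle and reverse triangle inequalities. Your explicit bookkeeping of the phases $c_P$ and the check that $PP'\neq\id$ for $P\neq P'$ is a careful (and welcome) spelling-out of details the paper leaves implicit.
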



So we can define two sets $\mathcal{S}_1$ and $\mathcal{S}_0$ as follows.
\begin{eqnarray}
 \mathcal{S}_1&=&\left\{\left|\tr\left(EC_0P_1\right)/N\right|:r_{P_1}\neq 0\right\}    \label{eqn:S1} \\
 \mathcal{S}_0&=&\left\{\left|\tr\left(EC_0P_1\right)/N\right|:r_{P_1}= 0\right\}    \label{eqn:S0} 
\end{eqnarray}
From our results so far, it follows that for small enough $\epsilon$ (which is usually the case in nearly all applications) the values in $\mathcal{S}_1$ are nearly equal, while those in $\mathcal{S}_0$ are nearly $0$. Let $\Delta=\max_{t_1\in\mathcal{S}_1,t_0\in\mathcal{S}_0} (t_1-t_0)$. Then to get a positive difference we have the following.
\begin{eqnarray}
 \Delta&\geq& \frac{1-\epsilon^2}{\sqrt{M}}-2\sqrt{M(2\epsilon^2-\epsilon^4)}>0 \nonumber \\
 \implies &&\epsilon^4-2\epsilon^2+\frac{1}{1+4M^2}\geq 0   \nonumber
\end{eqnarray}
Solving this we obtain the following conditions.
\begin{eqnarray}
 \epsilon^2\geq 1+\sqrt{1-\frac{1}{1+4M^2}}\quad \text{or}\quad\epsilon^2\leq 1-\sqrt{1-\frac{1}{1+4M^2}}
\end{eqnarray}
Since usually $\epsilon<1$, so we consider the second inequality. Expanding the term in the square root we obtain
\begin{eqnarray}
 \epsilon^2\leq \frac{1}{2}\cdot\frac{1}{1+4M^2}+\frac{1}{2!}\cdot\frac{1}{2\cdot 2}\left(\frac{1}{1+4M^2}\right)^2+\frac{1}{3!}\cdot\frac{3}{2\cdot 2\cdot 2}\left(\frac{1}{1+4M^2}\right)^3+\ldots    \nonumber
\end{eqnarray}
Since this function decreases with $M$ and $1\leq M\leq N^2$, so we can say that
\begin{eqnarray}
 \epsilon^2\leq 0.105572809\quad\implies \epsilon\leq 0.3249196962.
\end{eqnarray}
For all practical purposes, the value of $\epsilon$ is much smaller than this. So we can easily distinguish the sets $\mathcal{S}_0$ and $\mathcal{S}_1$.

\subsection{Algorithm}
\label{subsec:algo}

Now we are in a position to describe our exhaustive search algorithm, $\mathcal{A}_{MIN}$ (Algorithm \ref{alg:min}), that determines the $\epsilon$-T-count of a unitary $W\in\mathcal{U}_n$. This is an iterative procedure, where in every iteration we decide  if $\teps(W)=m$ for increasing values of a variable $m$. 

\begin{algorithm}
\footnotesize
 \caption{$\mathcal{A}_{MIN}$}
 \label{alg:min}
 
 \KwIn{(i) $W\in \mathcal{U}_n$, (ii) $\epsilon\geq 0$.}
 
 \KwOut{ $\teps(W)$.}
 
 $m\leftarrow 1$, $decision\leftarrow\text{NO}$    \;
 \While{(1)}
 {
    $decision\leftarrow\mathcal{A}_{DECIDE}(W,m,\epsilon)$ \;
    \eIf{$decision==\text{YES}$}
    {
        \textbf{return} $m$ \;
    }
    {
        $m\leftarrow m+1$   \;
    }
 }
 
 \end{algorithm}

 \begin{algorithm}
\footnotesize
 \caption{$\mathcal{A}_{DECIDE}$}
 \label{alg:decide}
 
 \KwIn{(i) $W\in \mathcal{U}_n$, (ii) integer $m > 0$, (iii) $\epsilon\geq 0$.}
 
 \KwOut{ $\text{YES}$ if $\exists U\in\clifft_n$ such that $d(U,W)\leq\epsilon$ and $\tcount(U)\leq m$; else $\text{NO}$.}
 
 $\mathcal{S}\leftarrow \{R(P):P\neq\id\}$ \label{Adec:S}\;
 
 \For{every $\widetilde{U}=\prod_{i=m}^1U_j$ such that $U_i\in\mathcal{S}$ and $U_i\neq U_{i+1}$}
 {
    $W'=W^{\dagger}\widetilde{U}$    \label{decide:W}\;
    $\mathcal{S}_c\leftarrow\{\left|\tr\left(W'P\right)/N\right|:P\in\pauli_n\}$ and sort this set in descending order \label{decide:Sc}\;
    \For{$M=1,2\ldots N^2$}
    {
        $\mathcal{S}_1\leftarrow $ First $M$ terms in $\mathcal{S}_c$  \label{decide:S1}\;
        $\mathcal{S}_0=\mathcal{S}_c\setminus\mathcal{S}_1$ \label{decide:S0}\;
        \If{each term in $\mathcal{S}_1 \in \left[\frac{1-\epsilon^2}{\sqrt{M}}-\sqrt{M(2\epsilon^2-\epsilon^4)},\frac{1}{\sqrt{M}}+\sqrt{M(2\epsilon^2-\epsilon^4)}\right]$ and each term in $\mathcal{S}_0\in\left[0,\sqrt{M(2\epsilon^2-\epsilon^4)}\right]$}
        {
            \If{YES $\leftarrow \mathcal{A}_{CONJ}(W',\epsilon)$}
            {
            \textbf{return} $\text{YES}$ \;
            }
        }
    }
 }
 \textbf{return} $\text{NO}$\;
\end{algorithm}

 \begin{algorithm}
\footnotesize
 \caption{$\mathcal{A}_{CONJ}$}
 \label{alg:conj}
 
 \KwIn{(i) $W'\in \mathcal{U}_n$, (ii) $\epsilon\geq 0$.}
 
 \KwOut{ $\text{YES}$ if $\exists C_0\in\cliff_n, E\in\mathcal{U}_n$ such that $W'=E^{\dagger}C_0$, where $d(E,\id)\leq\epsilon$ ; else $\text{NO}$.}
 
 $p\leftarrow 1$    \;
 
 \For{every $P_{out}\in \pauli_n$}
 {
    \If{$p==1$}
    {
        $p\leftarrow 0$ \;
    }

    \For{every $P_{in}\in\pauli_n$}
    {
        \If{$(1-4\epsilon^2+2\epsilon^4) \leq\left|\tr\left(W'P_{out}W'^{\dagger}P_{in}\right)\right|/N\leq 1$}
        {
            $p\leftarrow p+1$   \;
            \If{$p>1$}
             {
                \textbf{return} NO  \;
            }
        }
        \If{$2\epsilon <\left|\tr\left(W'P_{out}W'^{\dagger}P_{in}\right)\right|/N < (1-4\epsilon^2+2\epsilon^4) $}
        {
            \textbf{return} NO \;
        }
    }
 }
 
 \textbf{return} YES    \;
 \end{algorithm}
 
The main idea to solve the decision version is as follows. Suppose we have to test if $\teps(W)=m$. From the definitions given in Section \ref{subsubsec:epsTcountDepth}, we know that if this is true then $\exists U\in\clifft_n$ such that $\tcount(U)=m$. Let $U=\left(\prod_{i=m}^1U_i\right)C_0$ where $C_0\in\cliff_n$ and $U_i\in\{R(P):P\neq\id\}$. Let $\widetilde{U}=\prod_{i=m}^1U_i$. To test if we have guessed the correct $\widetilde{U}$, we can apply the results deduced in the previous section. Specifically we calculate $W'=W^{\dagger}\widetilde{U}$ and then calculate the set $\mathcal{S}_c=\left\{|\tr(W'P)/N|:P\in\pauli_n\right\}$, of coefficients. Then we check if we can distinguish two subsets $\mathcal{S}_0$ and $\mathcal{S}_1$, as shown in Equation \ref{eqn:S1} and \ref{eqn:S0}, for some $1\leq M\leq N^2$. Further details have been given in Algorithm \ref{alg:decide}.
Let us call this the \textbf{amplitude test}. After passing this test we have a unitary of the form $E^{\dagger}Q$ where $Q$ is a unitary. This test sort of \emph{filters out} the approximate values of the coefficients of $Q$ in the Pauli basis (Theorem \ref{thm:coeffEQ}). So after passing this test $Q$ will be a unitary with \emph{equal} or \emph{nearly equal} amplitudes or coefficients (absolute value) at some points and zero or nearly zero at other points.
To ensure $Q$ is a Clifford i.e. $W'=E^{\dagger}C_0$ for some $C_0\in\cliff_n$, we perform the \textbf{conjugation test} (Algorithm \ref{alg:conj}) for further verification. 

\begin{theorem}
Let $E,Q\in\mathcal{U}_n$ such that $d(E,\id)\leq\epsilon$. $P'\in\pauli_n\setminus\{\id\}$ such that $QP'Q^{\dagger}=\sum_{P}\alpha_{P}P$, where $\alpha_{P}\in\cmplx$. Then for each $P''\in\pauli_n$,
\begin{eqnarray} 
\min\{0,|\alpha_{P''}|(1-4\epsilon^2+2\epsilon^4) 
-2\epsilon\sum_{P\neq P''}|\alpha_P| \}
&\leq& \left|\tr\left((E^{\dagger}QP'Q^{\dagger}E)P''\right)\right|/N \nonumber \\
&\leq& \max\{ |\alpha_{P''}|+2\epsilon\sum_{P\neq P''}|\alpha_P|, 1\} 
\nonumber   
\end{eqnarray}

\label{thm:conj}
\end{theorem}

\begin{proof}
 We have the following.
 \begin{eqnarray}
 E^{\dagger}(QP'Q^{\dagger})E&=&\sum_{\hat{P}}|e_{\hat{P}}|^2\hat{P}\left(\sum_{P}\alpha_{P}P\right)\hat{P}+\sum_{\hat{P}\neq\tilde{P}}\conj{e_{\hat{P}}}e_{\tilde{P}}\hat{P}\left(\sum_{P}\alpha_{P}P\right)\tilde{P}    \nonumber \\
 &=&|e_{\id}|^2\sum_{P}\alpha_{P}P+\sum_{\hat{P}\neq\id,P}|e_{\hat{P}}|^2\alpha_{P}\hat{P}P\hat{P}+\sum_{\hat{P}\neq\tilde{P},P}\conj{e_{\hat{P}}}e_{\tilde{P}}\alpha_{P}\hat{P}P\tilde{P}  \label{eqn:conj0}
\end{eqnarray}
Multiplication by $P''$ gives us the following.
\begin{eqnarray}
 (E^{\dagger}QP'Q^{\dagger}E)P''&=&|e_{\id}|^2\alpha_{P''}\id+|e_{\id}|^2\sum_{P\neq P''}\alpha_{P}PP''+\sum_{\hat{P}\neq\id}|e_{\hat{P}}|^2\alpha_{P''}(\pm\id)   \nonumber \\
 &&+\sum_{\hat{P}\neq\id,P\neq P''}|e_{\hat{P}}|^2\alpha_{P}\hat{P}P\hat{P}P''+\sum_{\hat{P}\neq\tilde{P}}\conj{e_{\hat{P}}}e_{\tilde{P}}\alpha_{P''}\hat{P}P''\tilde{P}P'' \nonumber \\
 &&+\sum_{\hat{P}\neq\tilde{P},P\neq P''}\conj{e_{\hat{P}}}e_{\tilde{P}}\alpha_P\hat{P}P\tilde{P}P'' \nonumber
\end{eqnarray}
So,
\begin{eqnarray}
 \left|\tr\left((E^{\dagger}QP'Q^{\dagger}E)P''\right)/N\right| &\leq&|\alpha_{P''}|\sum_{\hat{P}}|e_{\hat{P}}|^2
 +\sum_{\hat{P}\neq\tilde{P},P\neq P''}  |\conj{e_{\hat{P}}}e_{\tilde{P}}\alpha_P| \left|\tr(\hat{P}P\tilde{P}P'')/N\right|   \nonumber\\
&=&|\alpha_{P''}| +\sum_{P\neq P''}|\alpha_P|\sum_{\hat{P}\neq\tilde{P}} |\conj{e_{\hat{P}}}e_{\tilde{P}}| \left|\tr(\hat{P}P\tilde{P}P'')/N\right|  
\qquad [\text{Fact }\ref{fact:trCoeff}]\nonumber
\end{eqnarray}
Given $P'', P, \hat{P}$, we can have $\hat{P}P\tilde{P}P''=\pm\id$ for one particular value of $\tilde{P}$. 
\begin{eqnarray}
 \left|\tr\left((E^{\dagger}QP'Q^{\dagger}E)P''\right)/N\right| \leq|\alpha_{P''}|+ \sum_{P\neq P''} |\alpha_P|\sum_{\hat{P}} |\conj{e_{\hat{P}}}||e_{\hat{P'}}| \qquad [\hat{P'}\neq \hat{P}\text{ is such that } \hat{P}P\hat{P'}P''=\pm\id]
  \nonumber
\end{eqnarray}
 Let $\id P \hat{P_0'}P''=\pm\id$ and $\hat{P_0}P\id P''=\pm\id$, for some Paulis $\hat{P_0'},\hat{P_0}\in\pauli_n\setminus\{\id\}$. Then we can write
\begin{eqnarray}
 \left|\tr\left((E^{\dagger}QP'Q^{\dagger}E)P''\right)/N\right| &\leq&|\alpha_{P''}|+ \sum_{P\neq P''} |\alpha_P|\left(|e_{\id}||e_{\hat{P_0'}}|+|e_{\hat{P_0}}||e_{\id}|+
 \sum_{\hat{P}\neq\id,\hat{P_0}} |e_{\hat{P}}||e_{\hat{P'}}| \right) \nonumber \\
 &\leq&|\alpha_{P''}|+ \sum_{P\neq P''} |\alpha_P|\left(|e_{\hat{P_0'}}|+|e_{\hat{P_0}}|+
 \sum_{\hat{P}\neq\id,\hat{P_0}} |e_{\hat{P}}||e_{\hat{P'}}| \right) \quad [\text{Equation \ref{eqn:eI}}]
  \nonumber
\end{eqnarray}
In Appendix \ref{app:bound} we show that $\sum_{\hat{P}\neq\id,\hat{P_0}}|e_{\hat{P}}||e_{\hat{P'}}|\leq (2\epsilon^2-\epsilon^4)$. We observe that in the above inequality we have taken $|e_{\id}|\leq 1$, but if $|e_{\id}|=1$ then $|e_P|=0$ for any $P\neq\id$, since $\sum_{P}|e_P|^2=1$. To get non-zero values for the sum within bracket $|e_{\id}|<1$. If we have to maximize $|e_{\hat{P_0'}}|+|e_{\hat{P_0}}|$ given Equation \ref{eqn:eNotI_1}, then if we consider an optimization problem with these two variables only, then it is not difficult to see that the maximum occurs if they have the same value. That is $|e_{\hat{P_0}}|+|e_{\hat{P_0'}}|\leq 2\sqrt{\frac{2\epsilon^2-\epsilon^4}{2}}\approx 2\epsilon$. Ignoring higher order terms of $\epsilon$, we can write the following.
\begin{eqnarray}
 \left|\tr\left((E^{\dagger}QP'Q^{\dagger}E)P''\right)/N\right| \leq|\alpha_{P''}|+2\epsilon \sum_{P\neq P''} |\alpha_P|
  \nonumber
\end{eqnarray}
We also have the following lower bound using similar reasoning as above.
\begin{eqnarray}
 \left|\tr\left((E^{\dagger}QP'Q^{\dagger}E)P''\right)/N\right| &\geq&\nonumber|\alpha_{P''}|\left(|e_{\id}|^2-\sum_{\hat{P}\neq\id}|e_{\hat{P}}|^2\right)-\sum_{P\neq P''}|\alpha_P|\sum_{\hat{P}\neq\tilde{P}}|\conj{e_{\hat{P}}}e_{\tilde{P}}|\left|\tr(\hat{P}P\tilde{P}P'')/N\right|  \nonumber \\
 &\geq&|\alpha_{P''}|(1-4\epsilon^2+2\epsilon^4)-2\epsilon\sum_{P\neq P''}|\alpha_P| \nonumber 
\end{eqnarray}

\end{proof}

And we have the following corollary.
\begin{corollary}

Let $C_0\in\cliff_n$ and $P'\in\pauli_n$ such that $C_0P'C_0^{\dagger}=\tilde{P}\in\pauli_n$. If $E\in\mathcal{U}_n$ such that $d(E,\id)\leq\epsilon$, then
\begin{eqnarray}
 (1-4\epsilon^2+2\epsilon^4) &\leq&\left|\tr\left((E^{\dagger}C_0P'C_0^{\dagger}E)P''\right)\right|/N\leq 1 \qquad \text{if }  P''=\tilde{P}  \nonumber   \\
0&\leq&\left|\tr\left((E^{\dagger}C_0P'C_0^{\dagger}E)P''\right)\right|/N\leq  2\epsilon  \qquad \text{else. }   \nonumber
\end{eqnarray}

 \label{cor:conjCliff}
\end{corollary}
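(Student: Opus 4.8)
The plan is to obtain Corollary \ref{cor:conjCliff} as an immediate specialization of Theorem \ref{thm:conj} to the case where the unitary $Q$ is a Clifford. The only extra ingredient needed is the defining property of the Clifford group: it normalizes the Pauli group, so conjugating a Pauli by a Clifford again yields a (signed) Pauli. Concretely, for $Q\in\cliff_n$ and $P'\in\pauli_n$ I would invoke Fact \ref{fact:cliffConj} to write $QP'Q^{\dagger}=P''$ with $P''\in\pauli_n$, the overall $\pm$ sign being absorbed into $P''$ so that it is a genuine element of the Pauli set as indexed in Theorem \ref{thm:conj}.

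First I would translate this into the expansion language used in that theorem. Writing $QP'Q^{\dagger}=\sum_{\hat{P}}\alpha_{\hat{P}}\hat{P}$, the normalizer property forces this sum to collapse to a single term: exactly one coefficient is nonzero, namely $\alpha_{P''}$, and since $P''$ is unitary with its phase already absorbed we have $|\alpha_{P''}|=1$, while $\alpha_{\hat{P}}=0$ for every $\hat{P}\neq P''$. This is the one genuinely content-bearing step, and it is precisely where Clifford-ness (as opposed to an arbitrary unitary $Q$) is used; everything else is substitution.

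Then I would simply apply Theorem \ref{thm:conj} with these values. In the case $\overline{P}=P''$ we are in the first branch (the unique $\hat{P}$ with $\alpha_{\hat{P}}\neq 0$); substituting $|\alpha_{P''}|=1$ turns the bound $|\alpha_{\hat{P}}|(1-3\epsilon^2+\epsilon^4)\leq\left|\tr\left((E^{\dagger}QP'Q^{\dagger}E)\overline{P}\right)\right|/N\leq|\alpha_{\hat{P}}|$ into $(1-3\epsilon^2+\epsilon^4)\leq\left|\tr(\cdots)\right|/N\leq 1$. In the remaining case $\overline{P}\neq P''$ we are in the second branch; since the only nonzero coefficient has modulus $1$, the quantity $\max_{\hat{P}:\alpha_{\hat{P}}\neq 0}|\alpha_{\hat{P}}|$ equals $1$, so the bound becomes $0\leq\left|\tr(\cdots)\right|/N\leq(2\epsilon^2-\epsilon^4)$. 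These are exactly the two inequalities claimed.

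The main obstacle, such as it is, is purely bookkeeping: confirming that the sign picked up under Clifford conjugation does not spoil $|\alpha_{P''}|=1$ (it does not, since $|{\pm 1}|=1$) and that $P''$ sits in the Pauli set exactly as the summation in Theorem \ref{thm:conj} indexes it. There is no new estimation to perform — all the analytic work, in particular the use of $|\tr(E)|\geq N(1-\epsilon^2)$ through inequalities \ref{eqn:eI}, \ref{eqn:eNotI_1} and \ref{eqn:absE}, has already been discharged inside the proof of Theorem \ref{thm:conj}.
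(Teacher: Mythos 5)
Your proposal is correct and is exactly how the paper treats this statement: the corollary is stated without separate proof as an immediate specialization of Theorem \ref{thm:conj}, using the fact that a Clifford conjugates a Pauli to a signed Pauli, so the expansion $QP'Q^{\dagger}=\sum_{\hat{P}}\alpha_{\hat{P}}\hat{P}$ collapses to a single coefficient of modulus $1$ and both branches of the theorem reduce to the claimed bounds. One trivial bookkeeping note: the normalizer property you need is the unnumbered statement $CPC^{\dagger}=(-1)^bP'$ in Appendix \ref{app:clifford} (indeed it is already hypothesized in the corollary itself), not Fact \ref{fact:cliffConj}, which asserts the converse existence of a Clifford mapping one given Pauli to another.
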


The above theorem and corollary basically says that $EC_0$ (or $E^{\dagger}C_0$) \emph{approximately} inherits the conjugation property of $C_0$. For each $P'\in\pauli_n$, if we expand $C_0P'C_0^{\dagger}$ in the Pauli basis then the absolute value of the coefficients has value 1 at one point, 0 in the rest. If we expand $EC_0P'C_0^{\dagger}E^{\dagger}$ in the Pauli basis then one of the coefficients (absolute value) will be almost 1 and the rest will be almost 0. From Theorem \ref{thm:conj} this pattern will not show for at least one Pauli $P'''\in\pauli_n$ if we have $E^{\dagger}Q$, where $Q\notin\cliff_n$. If we expand $EQP'''Q^{\dagger}E^{\dagger}$ or $E^{\dagger}QP'''Q^{\dagger}E$ in the Pauli basis then the \emph{spike in the amplitudes} will be in at least two points. Also, we observe that $2\epsilon < (1-4\epsilon^2+2\epsilon^4)$ for any $\epsilon\leq 0.31$. Thus there exists a distinguishable gap between the two cases of Corollary \ref{cor:conjCliff}. For all practical purposes $\epsilon$ is much less than this value.

\subsection{Synthesizing T-count-optimal circuits}
\label{subsec:cktSynth}

So far we have been able to determine $\teps(W)$ for any $W\in\mathcal{U}_n$. We now describe how we can synthesize $\epsilon$-T-count-optimal circuit for $W$, using the above algorithms. It is easy to see that $\mathcal{A}_{DECIDE}$ can return a sequence $\{U_m,\ldots,U_1\}$ of unitaries such that $U=\left(\prod_{i=\tcount(U)}^1U_i\right)C_0e^{i\phi}$ (for some $C_0\in\cliff_n$) and $\tcount(U)=\teps(W)$. We can efficiently construct circuits for each $U_i\in\{ R(P):P\neq\id\}$ using Fact \ref{fact:cliffConj}. So what remains, is to determine $C_0$. Then we can efficiently construct a circuit for it, for example, by using results in \cite{2004_AG}.

If $W=UE$ then at step \ref{decide:W} of Algorithm \ref{alg:decide} we calculate $W'=W^{\dagger}\widetilde{U}=e^{-i\phi}E^{\dagger}C_0^{\dagger}$, where $\widetilde{U}=\prod_{i=m}^1U_i$. 
From Algorithm \ref{alg:decide} we can also obtain the following information : (1) set $\mathcal{S}_1$, as defined in Equation \ref{eqn:S1}, (2) $M=|\mathcal{S}_1|$. Thus we can calculate $r=\frac{1}{\sqrt{M}}$ and from step \ref{decide:Sc} we can actucally calculate the set $\widetilde{\mathcal{S}_1}=\left\{(t_P,P): |t_P|=|\tr(W'P)/N|\in\mathcal{S}_1\right\}$.
From Equation \ref{eqn:trace} we can say that for small enough $\epsilon$ (say $<<\frac{1}{M}$) we have
\begin{eqnarray}
 \frac{\tr\left(E^{\dagger}C_0^{\dagger}P_1\right)}{\tr\left(E^{\dagger}C_0^{\dagger}P_2\right)}&\approx&\frac{\overline{r_{P_1}}}{\overline{r_{P_2}}}
\end{eqnarray}
We perform the following steps.
\begin{enumerate}
\item Calculate $a_P=\frac{t_P}{t_{\id}}=\frac{\tr(W'P)}{\tr(W')}$, where $(t_P,P)\in\widetilde{\mathcal{S}_1}$ (or equivalently $|t_P|\in\mathcal{S}_1$). We must remember that $(t_{\id},\id)\in\widetilde{\mathcal{S}_1}$.  
 
 We explained that from Equation \ref{eqn:trace}, $a_P\approx\frac{\overline{r_P}}{\overline{r_{\id}}}$. From Fact \ref{fact:cliffCoeff} we know that $\frac{|\overline{r_P}|}{|\overline{r_{\id}}|}=\frac{|r_P|}{|r_{\id}|}=1$. So we adjust the fractions such that their absolute value is $1$. For small enough $\epsilon$ this adjustment is not much and so with a slight abuse we use the same notation for the adjusted values.
 
 \item Select $c,d\in\real$ such that $c^2+d^2=r^2$. Let $\widetilde{r_{\id}}=c+di$. Then we claim that the Clifford $\widetilde{C_0}=\widetilde{r_{\id}}\sum_{P:r_P\neq 0}\overline{a_P}P$ is sufficient for our purpose. 
\end{enumerate}

It is not hard to see that $\widetilde{C_0}=e^{i\phi'}C_0$ for some $\phi'\in [0,2\pi)$. Thus if $U'=\left(\prod_{i=m}^1U_i\right)\widetilde{C_0}$, then $\tcount(U')=\tcount(U)$ and $d(U',W)\leq\epsilon$. 

\subsection{Complexity}

We first do a worst case analysis of both space and time complexity and then mention some practical considerations.

\subsection*{Time complexity : worst case analysis}

We first analyse the time complexity of $\mathcal{A}_{CONJ}$. The outer and inner loop at steps 2 and 6, respectively, can run at most $N^2* N^2= N^4$ times, where $N=2^n$. At step 7 multiplication of four $N\times N$ matrices take $O(N^2)$ time and calculating trace takes $N$ time steps. So overall complexity at step 7 is dominated by $O(N^2)$. We note that at step 13 we do not need to calculate the product and trace again. In the worst case every loop at steps 2 and 6 are implemented, incurring an overall time complexity of $O(N^6)$.

Now we analyse the time complexity of $\mathcal{A}_{DECIDE}$, when testing for a particular T-count $m$. The algorithm loops over all possible products of $m$ unitaries $U_j$, which is $R(P)$ in case of T-count-decision. Since there can be $N^2-1$  non-identity Paulis $P$, so this loop happens at most $N^{2m}$ times. Now in each such loop we do $m$ matrix multiplications at step 2 and 3. This has time complexity $O(mN^2)$. At step 4 we make a list of $N^2$ real numbers. Each is obtained by multiplying two $N\times N$ matrices and then taking trace. So time complexity for making this list is $ O(N^4)$. Sorting this list takes time $O(nN^2)$. The inner loop 5-13 happens $N^2$ times. Each of the list elements is checked and so step 8 has complexity $O(N^2)$. Now let within the inner loop the conjugation test is called $k_1$ times. So the loop 5-13 incurs a complexity $O(k_1\cdot N^6+(N^2-k_1)N^2)$, when $k_1>0$, else it is $O(N^4)$. Let $k$ is the number of outer loops (steps 2-14) for which conjugation test is invoked in the inner loop 5-13 and $k_1$ is the maximum number of times this test is called within any 5-13 inner loop.
Then the overall complexity of $\mathcal{A}_{DECIDE}$ is $O((N^{2m}-k)\cdot (mN^2+N^4+nN^2+N^4)+k\cdot (mN^2+N^4+nN^2+k_1N^6+(N^2-k_1)N^2 ) )\in O((N^{2m}-k)\cdot N^4+k\cdot (k_1N^6+(N^2-k_1)N^2 ))$, assuming $m<N^2$. 

The conjugation test is invoked only if a unitary passes the amplitude test. We assume that the occurrence of non-Clifford unitaries with equal amplitude is not so frequent such that $kk_1< N^{2m}-k$. (We did check this in our implementations.) Thus $\mathcal{A}_{DECIDE}$ has a complexity of $O(N^{2m+4})$, for one particular value of $m$. 
Hence, the overall algorithm $\mathcal{A}_{MIN}$ has a time complexity $O(N^{2\teps(W)+4})\in O(2^{2n\teps(W)+4n})$, with the given assumption.

\paragraph{Time complexity - practical considerations :} 
\begin{enumerate}
 \item It is not hard to see that if $[P_1,P_2]=P_1P_2-P_2P_1=0$ then $R(P_1)R(P_2)=(\alpha\id+\beta P_1)(\alpha\id+\beta P_2)=R(P_2)R(P_1)$, where $\alpha=\frac{1}{2}\left(1+e^{i\pi/4}\right)$ and $\beta=\frac{1}{2}\left(1-e^{i\pi/4}\right)$. Thus $$\left(\prod_iR(P_i)\right)R(P_1)R(P_2)\left(\prod_jR(P_j)\right)=\left(\prod_iR(P_i)\right)R(P_2)R(P_1)\left(\prod_jR(P_j)\right).$$ So in step 2 of Algorithm \ref{alg:decide} we need not loop over all $m$-length products of $R(P)$. It is easy to check if two $n$-qubit Paulis commute. There are even number of places where the respective 1-qubit Paulis are non-identity and different. We need not go into actual matrix multiplications. This can speed-up the actual running time by orders of magnitude. For example, for the unitaries considered by us, we obtained a speed-up of 5-10 times. In fact, it may be possible to show that the asymptotic complexity also decreases. One can work out more such symmetries in order to prune the search space.
 
 \item We already made an assumption that the number of non-Cliffords that pass the amplitude test is much less. Even if such a unitary is tested in $\mathcal{A}_{CONJ}$, we need not loop over $N^4$ times. As soon as there are 2 spikes for any outer loop Pauli $P_{out}$, the program exits (step 10). If a non-spike is ``far enough'' from 0 then also the program exits (step 14). So in most cases testing a non-Clifford with equal amplitudes take less time. If there is Clifford then all the $N^4$ loops have to run, but then it implies that $\mathcal{A}_{DECIDE}$ has obtained T-count.
 
 \item Most of the matrix multiplications, especially by Paulis are sparse, so here the run-time complexity is less. In step 2 of Algorithm \ref{alg:decide} one has to repeatedly multiply a unitary $U$ by $R(P)=\alpha\id+\beta P$. Since $P$ is sparse, we can first multiply $U$ by $P$,  then multiply each non-zero off-diagonal element by $\beta$ and finally add $\alpha$ to the diagonal. This can reduce some practical running time.
\end{enumerate}

\subsection*{Space complexity}

The input to our algorithm is a $N\times N$ unitary, with space complexity $O(N^2)$. In step 1 of $\mathcal{A}_{DECIDE}$, we can store the single qubit Paulis and calculate $R(P)$ whenever required. We require $O(N^2)$ space to perform matrix multiplication of two $N\times N$ matrices. In $\mathcal{A}_{DECIDE}$, we either store a $N\times N$ matrix or a list of $N^2$ real numbers (step 4). Even in $\mathcal{A}_{CONJ}$ we store one $N\times N$ matrix. Hence the overall space complexity is $O(N^2)\in O(2^{2n})$, without storing $R(P)$. This increases running time because we have to calculate the $n$-qubit Paulis and $R(P)$ repeatedly. But the asymptotic time complexity remains unchanged. 

If we store the $n$-qubit Paulis or $R(P)$, then we require $O(N^4)$ space. This factor dominates and overall space complexity is $O(N^4)\in O(2^{4n})$. In this approach the actual running time reduces but the asymptotic time complexity remains same.

\section{Implementation results}
\label{sec:result}

We implemented our algorithm $\mathcal{A}_{MIN}$ in standard C++17 on an Intel(R) Core(TM) i7-7700K CPU at 4.2GHz, with 8 cores and 48 GB RAM, running FreeBSD 13.1. We compiled the code using clang++ 13.0.0. We used OpenMP~\cite{OpenMP} for parallelization  and the Eigen~3 matrix library~\cite{Eigen} for some of the matrix operations. We applied our algorithm to return the T-count-optimal decomposition of the following unitaries : (i) 1-qubit $R_z(\theta)$ and $R_k$; (ii) 2-qubit controlled-$R_z(\theta)$ ($cR_z(\theta)$); (iii) 2-qubit controlled $R_k$ ($cR_k$); (iv) 3-qubit double controlled-$R_z(\theta)$ ($ccR_z(\theta)$); (v) 3-qubit double controlled-$R_k$ ($ccR_k$); (vi) 2-qubit Givens rotation ($Givens(\theta)$); (vii) 2-qubit QFT.
\begin{eqnarray}
\tiny
&(i)& R_z(\theta)=\begin{bmatrix}
             e^{-i\theta/2} & 0 \\
             0 & e^{i\theta/2}
            \end{bmatrix}; \qquad   
R_k=\begin{bmatrix}
             1 & 0 \\
             0 & e^{2\pi i/2^k}
            \end{bmatrix}  =e^{2\pi i/2^{k+1}}R_z\left(\frac{2\pi}{2^k}\right)  \nonumber \\
&(ii)& cR_z(\theta)=\diag\left(1,1, e^{-i\theta/2}, e^{i\theta/2}\right) \qquad
   (iii)\quad cR_k=\diag\left(1,1,1, e^{2\pi i/2^k}\right) \nonumber \\
&(iv)& ccR_z(\theta)=\diag\left(1,1,1,1,1,1,e^{-i\theta/2}, e^{i\theta/2} \right)
\nonumber \\
&(v)& ccR_k=\diag\left(1,1,1,1,1,1,1, e^{2\pi i/2^k} \right) 
\label{eqn:matrices} \qquad
(vi)\quad Givens(\theta)=\begin{bmatrix}
                       1 & 0 & 0 & 0 \\
                       0 & \cos\theta & -\sin\theta & 0 \\
                       0 & \sin\theta & \cos\theta & 0 \\
                       0 & 0 & 0 & 1
                      \end{bmatrix} \nonumber
\end{eqnarray}
For convenience, we have denoted some diagonal matrices by $\diag(a,b,c,\ldots)$ which implies that the matrix has elements $a,b,c,\ldots$ along the diagonal and $0$ elsewhere. We used Quantum++ \cite{2018_G} to obtain the unitary from the circuit given in \cite{2010_NC} (Figure \ref{fig:qft} in Appendix \ref{app:Tcount}). 

Controlled-$R_z(\theta)$ gates appear in many important quantum algorithms like Quantum Fourier Transform (QFT), phase estimation, factorization, order finding, hidden subgroup problem, Grover's search \cite{2010_NC}, quantum simulations \cite{2012_JWMetal}. Givens rotation appears in a number of quantum chemistry aplications \cite{2021_AdMQetal}. Before our work there was no algorithm for T-count of multi-qubit approximately implementable unitaries. In fact they returned optimal results only for single qubit $R_z(\theta)$. If possible other unitaries were decomposed into a sequence of $R_z(\theta)$ gates, for which we know the following empirical formula from \cite{2015_KMM}, where the T-count is averaged over $\theta$.
\begin{eqnarray}
 \tcount^{\epsilon}(R_z(\theta)) &=& 3.067\log_2\left(1/\epsilon\right)-4.322
 \label{eqn:KMM15}
\end{eqnarray}
\begin{figure}[h]
\centering
\begin{subfigure}{0.24\textwidth}
 \centering
 \includegraphics[width=2.5cm, height=1.5cm]{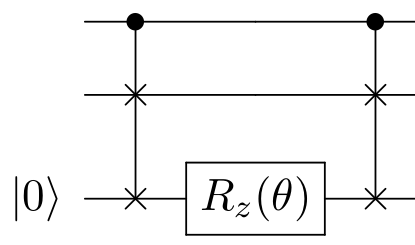}
 \caption{}
 \label{fig:cRz1}
\end{subfigure}
\hfil
\begin{subfigure}{0.24\textwidth}
 \centering
 \includegraphics[width=3cm, height=1.5cm]{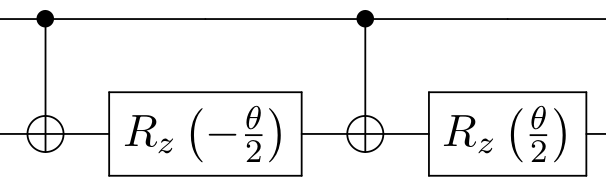}
 \caption{}
 \label{fig:cRz2}
 \end{subfigure}
 \hfil
\begin{subfigure}{0.24\textwidth}
 \centering
 \includegraphics[width=2.5cm, height=1.5cm]{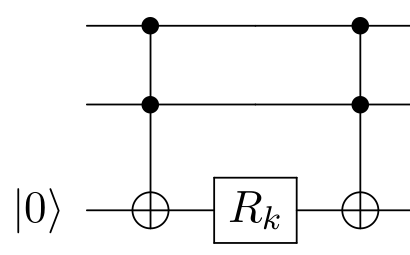}
 \caption{}
 \label{fig:cRk}
\end{subfigure}
\hfil
\begin{subfigure}{0.24\textwidth}
 \centering
 \includegraphics[width=2.5cm, height=1.75cm]{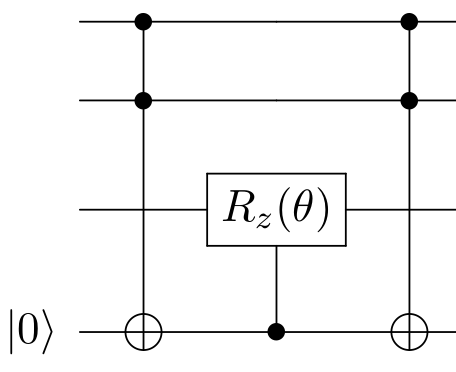}
 \caption{}
 \label{fig:ccRz}
 \end{subfigure}
 \caption{$cR_z(\theta)$ (a) implemented by using two Fredkin gates and one $R_z(\theta)$ gate \cite{2013_KMM}, where upper qubit is the control, middle qubit is the target and bottom qubit is the ancilla; and (b) implemented by two CNOT and two $R_z$ gates. (c) $cR_k$ implemented by two Toffoli gates and one $R_k$ gate. (d) $ccR_z(\theta)$ implemented by two Toffoli, one $cR_z(\theta)$ and one ancilla set to $\ket{0}$.}
 \label{fig:cRz}
\end{figure}
Then an upper bound was given by adding the T-count of component unitaries. For example, in Figure \ref{fig:cRz} we have shown two implementations of $cR_z(\theta)$ gate, that we found in literature. In Figure \ref{fig:cRz1} two Fredkin gates (T-count=7 \cite{2020_MM}), one $R_z(\theta)$ and an extra ancilla \cite{2013_KMM} is used. In Figure \ref{fig:cRz2}, the implementation uses two $R_z$ gates. So upper bound on the T-count of $cR_z(\theta)$, averaged over $\theta$ is as follows.
\begin{eqnarray}
 \#\T (cR_z(\theta)) &=& 3.067\log_2\left(1/\epsilon\right)-4.322+14 = 3.067\log_2\left(1/\epsilon\right)+9.678\quad [\text{Figure }\ref{fig:cRz1}]
 \label{eqn:T} \nonumber \\
 \#\T (cR_z(\theta)) &=& 2\left(3.067\log_2\left(1/\epsilon\right)-4.322\right) = 6.134\log_2\left(1/\epsilon\right)-8.644    \quad [\text{Figure }\ref{fig:cRz2}]  \nonumber
 \label{eqn:T2}
\end{eqnarray}
The first upper bound is better (gives less T-count) for every $\epsilon < 0.016$, but the implementation uses an extra ancilla. In Figure \ref{fig:cRk} and \ref{fig:ccRz} we show an implementation of $cR_k$ and $ccR_z(\theta)$ respectively. The latter circuit can be used to implement $ccR_k$ by replacing the $cR_z(\theta)$ with $cR_k$. Upper bounds on T-count can be deduced in a similar manner for the respective unitaries.

We took $\theta=\frac{2\pi}{2^k}$ and varied $k$ from 2 to 11. We were more interested in synthesizing multi-qubit unitaries, since these were not T-count-optimally synthesized before. It took on an average ~48 mins to synthesize a 2-qubit unitary with T-count at most 7; and about 5.7 hours for a 3-qubit unitary with T-count at most 4. We have synthesized only one 2-qubit unitary with T-count 9. We have not synthesized 2 and 3 qubit unitaries with higher T-count because of time constraint. We made the following observations.
\begin{enumerate}
 \item The T-count of controlled rotation gates reduce, as we increase the number of controls, at least for many of the angles and precision tested by us. This has been shown in Table \ref{tab:ccR}. The average running time has been stated before.
 
  \item The T-count of 2-qubit QFT is equal to the T-count of $R_2$ and has been shown in Table \ref{tab:qft}. In this table we have also shown the T-count of 3-qubit QFT for some precision. The running time for these tests has been explicitly mentioned \footnote{In a previous version and also in the published version of this paper we have erroneously reported that the T-count of 2-qubit QFT at precision $10^{-18}$ is 9. This is due to an error in the implementation. T-count of 2-qubit QFT remains 3 even when $\epsilon=10^{-18}$. The authors thank Marc Davis for pointing this out.}.
 
 \item The T-count of $Givens(\theta)$ is similar to $cR_z(\theta)$, on an average. (So we have not shown it separately.)
 
 \item The T-count of $R_z(\theta)$ (and hence $R_k$) agrees with the results given in \cite{2015_KMM}.
\end{enumerate}
The numerical results of this subsection, together with instructions on how to reproduce them, are available online at \url{https://github.com/vsoftco/approx-t}.
\begin{table}[!htbp]
 \centering
 \begin{tabular}{|c|c|c|c|c|c|c|}
  \hline
 $\epsilon$ & $k$ & $R_z(\frac{2\pi}{2^k})$ & $cR_z(\frac{2\pi}{2^k})$ & $cR_k$ & $ccR_z(\frac{2\pi}{2^k})$  & $ccR_k$ \\
  \hline\hline
  \multirow{5}{*}{$0.05$} & 2 & 0 & 2 & 3 & $> 4$ & $> 4$  \\
  \cline{2-7}
   & 3 & 1 & $> 7$ & $> 7$ & $> 4$ & $> 4$  \\
  \cline{2-7}
   & 4 & 8 & $> 7$ & $> 7$ & 0 & 0  \\
  \cline{2-7}
   & 5 & 9 & 0 & 0 & 0 & 0  \\
  \cline{2-7}
  & 6-11 & 0 & 0 & 0 & 0 & 0 \\
  \hline\hline
   \multirow{7}{*}{$10^{-2}$} & 2 & 0 & 2 & 3 & $> 4$ & $> 4$ \\
  \cline{2-7}
   & 3 & 1 & $> 7$ & $> 7$ & $> 4$ & $> 4$   \\
  \cline{2-7}
   & 4 & 18 & $> 7$ & $> 7$ & $> 4$ & $> 4$   \\
  \cline{2-7}
   & 5 & 17 & $> 7$ & $> 7$ & $> 4$ & $> 4$   \\
  \cline{2-7}
& 6 & 16 & $> 7$ & $> 7$ & 0 & 0  \\
  \cline{2-7}
  & 7 & 11 & 0 & 0 & 0 & 0 \\
  \cline{2-7}
  & 8-11 & 0 & 0 & 0 & 0 & 0 \\
  \hline\hline 
  \multirow{2}{*}{$10^{-3}$} & 10 & 26 & $> 7$ & $> 7$ & 0 & 0 \\
  \cline{2-7}
  & 11 & 26 & 0 & 0 & 0 & 0 \\
  \hline
 \end{tabular}
 \caption{Comparison of $\epsilon$-T-count of different (controlled) rotation gates for various angles and precision. }
 \label{tab:ccR}
\end{table}

\begin{table}[!htbp]
\centering
\begin{tabular}{|c|c|c|c|}
 \hline
 $\#$Qubits & $\epsilon$ & $\epsilon$-T-count & Running time \\
 \hline\hline
 \multirow{1}{*}{2} &  $0.05 - 10^{-17}$ & 3 & 35 ms  \\
 \hline\hline
 \multirow{2}{*}{3} & $10^{-2}$ & 0 & 3 ms  \\
 \cline{2-4}
  & $10^{-3}$ & $> 4$ & 24 hr  \\
 \hline
 \end{tabular}
 \caption{$\epsilon$-T-count of 2 and 3 qubit QFT. }
 \label{tab:qft}
\end{table}

\section{Discussion : T-depth-optimal synthesis}
\label{sec:depth}

The algorithms \ref{alg:min} ($\mathcal{A}_{MIN}$) and \ref{alg:decide} ($\mathcal{A}_{DECIDE}$) can be used for T-depth-optimal-synthesis of any multi-qubit unitary, since we know there is a finite generating set \cite{2021_GMM} such that any exactly implementable unitary can be written as a product of elements from this set and a Clifford. We first give some definitions.

\subsubsection*{T-depth of circuits}

Suppose the unitary $U$ implemented by a circuit is written as a product $U=U_mU_{m-1}\ldots U_1$ such that each $U_i$ can be implemented by a circuit in which all the gates can act in parallel or simultaneously. We say $U_i$ has depth 1 and $m$ is the \emph{depth} of the circuit. The \emph{T-depth of a circuit} is the number of unitaries $U_i$ where the $\T/\T^{\dagger}$ gate is the only non-Clifford gate and all the $\T/\T^{\dagger}$ gates can act in parallel\footnote{The remaining Clifford gates within each $U_i$ may not act in parallel.}. 

\subsubsection*{T-depth of exactly implementable unitaries}

The \emph{T-depth} or \emph{min-T-depth of an exactly 
synthesizable unitary} $U\in\clifft_n$, denoted by $\tcount_d(U)$, is the minimum T-depth of a Clifford+T circuit that implements it (up to a global phase). 
In \cite{2021_GMM} a subset, $\mathbb{V}_n\subset\{\prod_{i\in [n]}C\tset_{(i)}C^{\dagger}, C\in\cliff_n,\tset\in\{\T,\T^{\dagger},\id\}\}$, of T-depth-1 unitaries has been defined. 
It has been shown that $|\genv_n|\leq n\cdot 2^{5.6n}$ and any T-depth-1 unitary $U_1\in\clifft_n$ can be written as
\begin{eqnarray}
 U_1=e^{i\phi}\left(\prod_{i\geq 1} V_i\right)C_0\qquad \text{where } V_i\in\genv_n\text{ and }C_0\in\cliff_n
 \label{eqn:TdepthDecompose}
\end{eqnarray}
We call each $V_i$, with T-depth 1, as a \emph{(parallel) block} and it can be written as product of $R(P)$ or $R^{\dagger}(P)$, where $P\in\pm\pauli_n$. It is possible to multiply consecutive T-depth-1 unitaries to get another T-depth-1 unitary (conditions given in \cite{2021_GMM}). Thus $\genv_n$ can be regarded as a generating set (modulo Clifford) for the set of T-depth-1 unitaries, and hence for the complete group $\clifft_n$.
A decomposition which has the minimum number of T-depth-1 unitaries is called a \emph{T-depth-optimal decomposition}. A circuit implementing $U\in\clifft_n$ with the minimum T-depth is called a \emph{T-depth-optimal circuit}.

 \subsubsection*{$\epsilon$-T-depth of approximately implementable unitaries}
 
 The \emph{$\epsilon$-T-depth} of an approximately implementable unitary $W\in\mathcal{U}_n$, denoted by $\tdeps(W)$, is equal to $\tcount_d(U)$, the T-depth of an exactly implementable unitary $U\in\clifft_n$ such that $d(U,W)\leq\epsilon$ and $\tcount_d(U)\leq\tcount_d(U')$ for any $U'\in\clifft_n$ and $d(U',W)\leq\epsilon$.
We call a T-count-optimal (or T-depth-optimal) circuit for any such $U$ as the \emph{$\epsilon$-T-count-optimal} (or \emph{$\epsilon$-T-depth-optimal} respectively) circuit for $W$. 

\subsubsection*{Modification of $\mathcal{A}_{MIN}$}

Since the set $\genv_n$ is finite, so it is not hard to see that algorithms $\mathcal{A}_{MIN}$ and $\mathcal{A}_{DECIDE}$ can be applied to find T-depth-optimal decomposition of any unitary $W\in\mathcal{U}_n$. Replace step \ref{Adec:S} of Algorithm \ref{alg:decide} by $\mathcal{S}\leftarrow \genv_n$. Suppose $W=UE$ for some exactly implementable unitary $U$ such that $\tdeps(W)=\tdepth(U)$ and $d(W,U)\leq\epsilon$. Then we can decompose $U$ as in Equation \ref{eqn:TdepthDecompose}. If we have guessed the correct $V_1,\ldots,V_d$ then after multiplying $\prod_iV_i$ with $W^{\dagger}$ we are left with $EC_0$. Now the amplitude test and conjugation test can be applied to check if we have the correct guess. We have said before that it is possible to multiply consecutive $V_i$ such that the product has T-depth 1. In that case the $\epsilon$-T-depth is less than $d$. So to find the minimum possible T-depth we might have to iterate more than $\tdeps(W)$ times.
\subsubsection*{Time complexity}

The time complexity of conjugation test $\mathcal{A}_{CONJ}$ is same as before. The analysis of the complexity of $\mathcal{A}_{DECIDE}$ is also similar, but here $|\mathcal{S}|=|\genv_n|$, so if we take all possible $m'$-length product at step 2, then the number of iterations for the outer loop 2-14 is at most $n2^{5.6nm'}$. The complexity of the remaining steps are same, so the overall complexity of $\mathcal{A}_{DECIDE}$ is $O(n2^{5.6nm'+4n})$. We explained before that it is possible to combine more than one consecutive unitaries $V_i\in\genv_n$ such that we get one T-depth 1 unitary. Thus this procedure gives a T-depth $m \leq m'$. We do not know how much is the difference $m'-m$.

Alternatively, we can do a pessimistic analysis of $\mathcal{A}_{DECIDE}$. This algorithm is basically an exhaustive search procedure to test for a certain T-depth $m$. Let in step 2 we make sure that we have a T-depth-m unitary $\tilde{U}$ i.e. it is not possible to combine any further. Basically, this means $\tilde{U}$ is from the set of T-depth-1 unitaries modulo Clifford. Now there can be at most $O(4^{n^2})$ of these. This is because there can be at most $O(4^{n^2})$ $n$-length product of $R(P)$. This is a naive bound and more explanations can be found in \cite{2021_GMM}. So this time the outer loop can occur at most $O(4^{n^2m})$ times. Arguing in the same way, the complexity of $\mathcal{A}_{DECIDE}$ is $O(2^{2n^2m+4n})$, and hence complexity of $\mathcal{A}_{MIN}$ is $O(2^{2n^2\tdeps(W)+4n})$.

\subsubsection*{Space complexity}

In step 1 of $\mathcal{A}_{DECIDE}$ we can store $|\genv_n|$ in a symbolic way, for example, for each $V_i=\prod_j R(P_j)\in\genv_n$, simply store the Paulis in the product. Then we can calculate the necessary matrices whenever necessary by taking products of the corresponding $R(P)s$. In all other steps we store $N\times N$ matrix, taking at most $O(N^2)\in O(2^{2n})$ space. Thus space complexity is $O(n2^{5.6n})$. As explained before this approach leads to more running time, without affecting the asymptotic time complexity.

In this paper we do not implement our algorithm to determine $\epsilon$-T-depth. For small enough $\epsilon$, we can use the procedure chalked out in Section \ref{subsec:cktSynth} to synthesize a T-depth-optimal circuit.

\section{Conclusion}
\label{sec:conclude}

In this paper we have developed results and algorithms that can be used to determine the optimal count or depth of non-Clifford gates required to implement any multi-qubit unitary with the gates of a finite universal gate set, that consists of Clifford and non-Clifford gates like Clifford+V, Clifford+CS, etc. Our primary focus has been the Clifford+T set. Given an $2^n\times 2^n$ unitary $W$, the space complexity of our algorithm is $\poly(2^n)$. The time complexity has an exponential dependence on $\teps(W)$ or $\tdeps(W)$, while determining T-count and T-depth-respectively. For small enough $\epsilon$, we show how we can synthesize the $\epsilon$-T-count or $\epsilon$-T-depth-optimal circuit. 

To the best of our knowledge, we give the first algorithm that works for arbitrary multi-qubit unitaries (exactly or approximately implementable) and any universal gate set. We think it is unlikely that we can have an algorithm whose complexity has a polynomial dependence on $\epsilon$-T-count or $\epsilon$-T-depth. A complexity theoretic argument to prove or disprove Conjecture \ref{conj:approx} will throw theoretical insights into the complexity of these problems. It might be possible to decrease the exponent in the time complexity by applying techniques like meet-in-the-middle \cite{2014_GKMR} or nested meet-in-the-middle \cite{2020_MM, 2021_GMM}. It is not hard to see that our algorithm can be parallelized. It will be interesting to investigate if additional tricks can be used. Another interesting question is to find more compact generating sets for other universal gate sets for multi-qubit unitaries.

\section*{Acknowledgement} 
We thank Vern I. Paulsen and Adina Goldberg for useful discussions. We thank Earl Campbell and Nathan Wiebe for pointing out the (previous) implementations of $cR_z(\theta)$ gate (Figure 1). We thank the anonymous reviewers for many helpful comments that helped us improve the manuscript and also for pointing out a mistake in the pseudocode (Algorithm 3). We also thank Jiaxin Huang and Hong Tao Zhang for pointing out mistake in the pseudocode, as well as running some of the tests on their laptops. We thank Marc Davis for pointing out a mistake in the implementation of 2-qubit QFT at precision $10^{-18}$. The authors wish to thank NTT Research for their financial and technical support. This work was supported in part by Canada's NSERC. Research at IQC is supported in part by the Government of Canada through Innovation, Science and Economic Development Canada. The Perimeter Institute (PI) is supported in part by the Government of Canada and Province of Ontario (PI). 

\section*{Author contributions}
The ideas were given by P.Mukhopadhyay. The software implementations were done by V. Gheorghiu. All the authors contributed to the preparation of the manuscript. 

\section*{Data availability}

Numerical results together with instructions on how to reproduce them, are available online at \url{https://github.com/vsoftco/approx-t}.

\section*{Code availability}

The code is available from the corresponding author on request.

\section*{Competing interests}
There are no competing interests.


\appendix
\section{Some additional preliminaries}
\label{app:prelim}
\subsection{Cliffords and Paulis}
\label{app:clifford}

The \emph{single qubit Pauli matrices} are as follows:
\begin{eqnarray}
 \X=\begin{bmatrix}
     0 & 1 \\
    1 & 0
    \end{bmatrix} \qquad  
 \Y=\begin{bmatrix}
     0 & -i \\
     i & 0
    \end{bmatrix} \qquad 
 \Z=\begin{bmatrix}
     1 & 0 \\
     0 & -1
    \end{bmatrix}\nonumber
\label{eqn:Pauli1}
\end{eqnarray}
Parenthesized subscripts are used to indicate qubits on which an operator acts For example, $\X_{(1)}=\X\otimes\id^{\otimes (n-1)}$ implies that Pauli $\X$ matrix acts on the first qubit and the remaining qubits are unchanged.

The \emph{$n$-qubit Pauli operators} are :
$
 \pauli_n=\{Q_1\otimes Q_2\otimes\ldots\otimes Q_n:Q_i\in\{\id,\X,\Y,\Z\} \}.
$

The \emph{single-qubit Clifford group} $\cliff_1$ is generated by the Hadamard and phase gates :
$
 \cliff_1=\braket{\had,\phase} 
 $
where
\begin{eqnarray}
 \had=\frac{1}{\sqrt{2}}\begin{bmatrix}
       1 & 1 \\
       1 & -1
      \end{bmatrix}\qquad 
 \phase=\begin{bmatrix}
       1 & 0 \\
       0 & i
      \end{bmatrix}\nonumber
\end{eqnarray}
When $n>1$ the \emph{$n$-qubit Clifford group} $\cliff_n$ is generated by these two gates (acting on any of the $n$ qubits) along with the two-qubit $\CNOT=\ket{0}\bra{0}\otimes\id+\ket{1}\bra{1}\otimes\X$ gate (acting on any pair of qubits). 

The Clifford group is special because of its relationship to the set of $n$-qubit Pauli operators. Cliffords map Paulis to Paulis, up to a possible phase of $-1$, i.e. for any $P\in\pauli_n$ and any $C\in\cliff_n$ we have
$
    CPC^{\dagger}=(-1)^bP'
$
for some $b\in\{0,1\}$ and $P'\in\pauli_n$. In fact, given two Paulis (neither equal to the identity), it is always possible to efficiently find a Clifford which maps one to the other.
\begin{fact}[\cite{2014_GKMR}]
 For any $P,P'\in\pauli_n\setminus\{\id\} $ there exists a Clifford $C\in\cliff_n$ such that $CPC^{\dagger}=P'$. A circuit for $C$ over the gate set $\{\had,\phase,\CNOT\}$ can be computed efficiently (as a function of $n$).
 \label{fact:cliffConj}
\end{fact}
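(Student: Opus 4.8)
The plan is to reduce the general case to a single canonical case and then compose. First I would prove the special claim that for \emph{any} single non-identity Pauli $R\in\pauli_n\setminus\{\id\}$ there is an efficiently computable Clifford $C_R\in\cliff_n$ with $C_R R C_R^{\dagger}=\Z_{(1)}$, the Pauli $\Z$ acting on the first qubit. Granting this, given $P,P'$ I would set $C=C_{P'}^{\dagger}C_P$; since $\cliff_n$ is a group this is again a Clifford, and $CPC^{\dagger}=C_{P'}^{\dagger}\bigl(C_PPC_P^{\dagger}\bigr)C_{P'}=C_{P'}^{\dagger}\Z_{(1)}C_{P'}=P'$, with a circuit obtained by concatenating the circuit for $C_P$ with the reversed (adjoint) circuit for $C_{P'}$.

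For the canonical reduction I would proceed constructively in three stages, tracking the $\pm1$ sign throughout (recall from the displayed relation $CPC^{\dagger}=(-1)^bP'$ just before the Fact that conjugation lands in $\pauli_n$ only up to a sign). Stage one: on each qubit $j$ apply a single-qubit Clifford sending the local factor of $R$ to $\Z$ when it is non-identity and fixing $\id$ otherwise; concretely $\had$ maps $\X\mapsto\Z$, the identity fixes $\Z$, and $\had\phase^{\dagger}$ maps $\Y\mapsto\Z$. After this stage $R$ has been conjugated to $\pm\bigotimes_{j\in S}\Z_{(j)}$ over its (nonempty) support $S$. Stage two: use the conjugation rule $\CNOT(\Z_{(a)}\Z_{(b)})\CNOT=\Z_{(b)}$ (control $a$, target $b$) to delete one $\Z$ factor at a time until a single $\Z$ remains, then a $\CNOT$-built swap to move it to qubit $1$, giving $\pm\Z_{(1)}$. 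Stage three: if the accumulated sign is $-1$, conjugate once more by $\X_{(1)}$, which anticommutes with $\Z_{(1)}$ and hence satisfies $\X_{(1)}(-\Z_{(1)})\X_{(1)}=\Z_{(1)}$, yielding exactly $\Z_{(1)}$.

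I would then bound the cost: stage one uses at most two single-qubit gates per qubit, stage two uses $O(n)$ $\CNOT$s for the merge plus $O(1)$ for the swap, and stage three adds at most one gate, so each $C_R$ has $O(n)$ gates and is computed in time polynomial in $n$ from the $2n$-bit $(\X,\Z)$-description of $R$; concatenation preserves this asymptotic cost for $C$.

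The main obstacle is the bookkeeping of the global $\pm1$ sign rather than the Pauli rewriting itself. The existence of \emph{some} Clifford conjugating $P$ to $P'$ up to phase follows immediately from transitivity of the induced symplectic action on the nonzero vectors of $\intg_2^{2n}$, but the Fact asserts the exact equality $CPC^{\dagger}=P'$ with no phase. The argument succeeds only because a wrong sign can always be repaired by a single conjugation with an anticommuting Pauli (itself a Clifford), and it is precisely this sign-correction step, together with the verification that the three stages compose without re-introducing a sign error, that I would check most carefully.
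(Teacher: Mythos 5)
Your proof is correct, but there is nothing in the paper to compare it against: the paper does not prove this Fact at all, it imports it verbatim from \cite{2014_GKMR} as a known result and uses it as a black box (e.g.\ to build the Cliffords $C_i$ in Theorem \ref{thm:decompose} and in Definition \ref{defn:Vn}). Your argument is the standard stabilizer-formalism construction that underlies the cited result: reduce each of $P$ and $P'$ to the canonical form $\Z_{(1)}$ by single-qubit Cliffords ($\had$ for $\X$, $\had\phase^{\dagger}$ for $\Y$) followed by $\CNOT$ merges of the $\Z$ factors, then compose $C=C_{P'}^{\dagger}C_P$ using the group structure of $\cliff_n$. All of your individual conjugation identities check out, including $\CNOT\left(\Z_{(a)}\Z_{(b)}\right)\CNOT=\Z_{(b)}$ and the repair step $\X_{(1)}\left(-\Z_{(1)}\right)\X_{(1)}=\Z_{(1)}$. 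One small observation: with the paper's phase-free definition of $\pauli_n$ and your particular gate choices, every stage is sign-exact, so the accumulated sign is always $+1$ and stage three never fires; it is nevertheless the right thing to include, since Clifford conjugation in general only preserves Paulis up to the $(-1)^b$ in the display preceding the Fact, and the Fact demands exact equality. Your cost accounting ($O(n)$ gates, $\poly(n)$ classical time, adjoint circuits remaining in $\{\had,\phase,\CNOT\}$ because $\phase^{\dagger}=\phase^3$) is also sound.
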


\begin{fact}
Let $Q=\sum_{P\in\pauli_n}q_PP$ be the expansion of a matrix $Q$ in the Pauli basis. Then
$$q_P=\tr\left(QP\right)/N \qquad [N=2^n].
$$
Further, if $Q$ is a unitary then
$$
    \sum_{P\in\pauli_n}\left|q_P\right|^2=1
$$
 \label{app:fact:trCoeff}
\end{fact}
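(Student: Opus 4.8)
The plan is to exploit the fact that the $n$-qubit Pauli operators form an orthogonal basis of the space of $N\times N$ matrices under the Hilbert--Schmidt inner product $\langle A,B\rangle = \tr(A^{\dagger}B)$. Both claims then follow by multiplying the given Pauli expansion by an appropriate operator and taking traces.

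The crux, and the only step that is not purely mechanical, is to establish the orthogonality relation
$$\tr(PP') = N\,\delta_{P,P'}\qquad(P,P'\in\pauli_n).$$
I would argue this from the tensor-product structure of $\pauli_n$. Each single-qubit Pauli in $\{\id,\X,\Y,\Z\}$ is Hermitian with $P^2=\id$, and the product of two \emph{distinct} single-qubit Paulis is a phase in $\{\pm1,\pm i\}$ times a non-identity single-qubit Pauli (e.g.\ $\X\Y=i\Z$). Thus if $P\neq P'$ they differ in at least one tensor slot, whence $PP'=c\,P''$ with $|c|=1$ and $P''\neq\id$; since the trace factorizes over tensor factors and every non-identity single-qubit Pauli is traceless, $\tr(PP')=0$. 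If $P=P'$ then $PP'=\id$ and $\tr(PP')=N$. This is exactly the claimed relation. I expect this to be the main obstacle, though really only at the level of tracking the $\{\pm1,\pm i\}$ phases; conceptually it is immediate from the tensor structure.

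Granting the orthogonality relation, the first identity drops out by writing $Q=\sum_{P'\in\pauli_n}q_{P'}P'$, right-multiplying by $P$, and taking the trace, so that $\tr(QP)=\sum_{P'\in\pauli_n}q_{P'}\tr(P'P)=N\,q_P$; dividing by $N$ gives $q_P=\tr(QP)/N$ (here $P^{\dagger}=P$, so this coincides with the Hilbert--Schmidt coefficient). For the second identity I would assume $Q$ unitary, write $Q^{\dagger}=\sum_{P\in\pauli_n}\conj{q_P}P$, and expand $\id=QQ^{\dagger}=\sum_{P,P'\in\pauli_n}q_P\conj{q_{P'}}PP'$. Taking the trace and invoking the orthogonality relation collapses the double sum onto its diagonal, yielding $N=\tr(\id)=N\sum_{P\in\pauli_n}|q_P|^2$, and dividing by $N$ finishes the proof.
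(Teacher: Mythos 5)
Your proof is correct and follows essentially the same route as the paper's: expand in the Pauli basis, take traces, and for the unitary case expand $QQ^{\dagger}=\id$ so that the cross terms vanish under the trace. The only difference is one of detail—you explicitly state and prove the orthogonality relation $\tr(PP')=N\,\delta_{P,P'}$, which the paper's very terse proof uses implicitly.
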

\begin{proof}
 If $Q=\sum_{P\in\pauli_n}q_PP$, then taking trace on both sides and dividing by $N$ we get the first equation.
 
 If $Q$ is unitary then we get the following.
 \begin{eqnarray}
  QQ^{\dagger}=\id=\left(\sum_Pq_PP\right)\left(\sum_P\overline{q_P}P\right)=\sum_P|q_P|^2\id+\sum_{P\neq P'}q_P\overline{q_{P'}}PP'    \nonumber
 \end{eqnarray}
Now taking trace on both sides and dividing by $N$, we get the second equation.
\end{proof}

\subsection{The group generated by Clifford and $\T$ gates}
\label{app:Jn}

The group $\clifft_n$ is generated by the $n$-qubit Clifford group along with the $\T$ gate, where
\begin{eqnarray}
 \T=\begin{bmatrix}
     1 & 0 \\
     0 & e^{i\frac{\pi}{4}}
    \end{bmatrix}
\end{eqnarray}

Thus for a single qubit
$
 \clifft_1 = \braket{\had,\T}  
$
and for $n>1$ qubits
\begin{eqnarray}
 \clifft_n=\braket{\had_{(i)},\T_{(i)},\CNOT_{(i,j)}:i,j\in [n]}.
 \nonumber
\end{eqnarray}
It can be easily verified that $\clifft_n$ is a group, since the $\had$ and $\CNOT$ gates are their own inverses and $\T^{-1}=\T^7$. Here we note $\phase=\T^2$.

\section{Some additional upper bound}
\label{app:bound}

In this section we want to prove the following claim.
\begin{claim}
$$
  \sum_{\hat{P}}|\conj{e_{\hat{P}}}||e_{\hat{P'}}|=  \sum_{\hat{P}}|e_{\hat{P}}||e_{\hat{P'}}|\leq 2\epsilon^2-\epsilon^4 \qquad [\hat{P'}\neq\hat{P}, \hat{P},\hat{P'}\neq\id] 
$$
For each $\hat{P}$, there exists one $\hat{P'}$ that satisfies a certain condition. 
 \label{claim:e1e2Bound}
\end{claim}

\begin{proof}

Let us denote $|e_{\hat{P}}|$ by  $x_i$, where $i\in {1,\ldots, N^2-1}$. If $\hat{P'}$ is the Pauli that satisfies the given condition with $\hat{P}$, then we denote the corresponding variable by $x_{i'}$. We must remember that this convention of indices is adapted to identify the variables that satisfy a condition. Each of the variables $x_i \geq 0$. From Equation \ref{eqn:eNotI_1} we know that
$  \sum_{i}x_i^2\leq 2\epsilon^2-\epsilon^4$. 
\begin{eqnarray}
 \sum_{\hat{P}}|e_{\hat{P}}||e_{\hat{P'}}|=\sum_ix_ix_{i'}\leq \sum_{i,j\neq i}x_ix_j \quad [\because x_i\geq 0]    \nonumber
\end{eqnarray}
So we upper bound the following optimization problem.
\begin{eqnarray}
 \max_{x_i} && \sum_{i,j\neq i}x_ix_j \nonumber \\
 \text{such that }&& \sum_{i\neq 0}x_i^2= 2\epsilon^2-\epsilon^4
\end{eqnarray}
We can use Karush-Kuhn-Tucker (KKT) conditions \cite{1939_K, 1951_KT, 2014_KT} to solve the above optimization problem with inequality constraint. For simplicity and without loss of much generality we have used equality in the constraint and follow the Lagrangian method of optimization. 
Let $\lambda$ is a Lagrange multiplier. The Lagrange formulations is:
\begin{eqnarray}
 \mathcal{L}&=&\sum_{i,j\neq i}x_ix_j +\lambda\left(\sum_{i\neq 0}x_i^2 - (2\epsilon^2-\epsilon^4)\right)    \nonumber
\end{eqnarray}
To find the optimal points we have the following for each $i$.
\begin{eqnarray}
 \frac{\partial\mathcal{L}}{\partial x_i}&=&2\sum_{j\neq i}x_j+2\lambda x_i=0   \nonumber \\
 &\implies&\sum_{j}x_j+(\lambda -1)x_i=0    \nonumber\\
 &\implies& x_i=\frac{\sum_{j}x_j}{1-\lambda}
\end{eqnarray}
Since this is true for each $x_i$, we can infer that $x_i=x_j$, where $i\neq j$ and hence
\begin{eqnarray}
 x_i^2=\frac{2\epsilon^2-\epsilon^4}{N^2-1}\quad&\implies& x_i=\sqrt{\frac{2\epsilon^2-\epsilon^4}{N^2-1}}   \nonumber. \\
 \text{and } \sum_{i,j\neq i}x_ix_j &\leq&(N^2-1)\frac{2\epsilon^2-\epsilon^4}{N^2-1}=2\epsilon^2-\epsilon^4.
\end{eqnarray}
The claim follows.

\end{proof}

\section{Circuit for Quantum Fourier Transform}
\label{app:Tcount}

\begin{figure}[h]
\centering
\includegraphics[width=12cm, height=4cm]{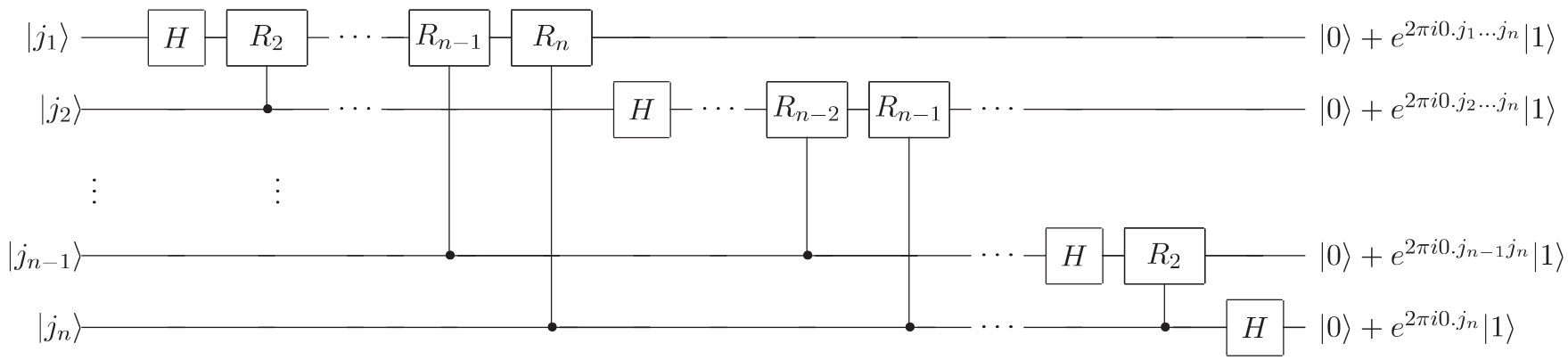}
\caption{Quantum Fourier Transform circuit given in \cite{2010_NC}}
\label{fig:qft}
\end{figure}

\end{document}